\colorlet{shadecolor}{yellow!20}
\newtheorem{Proposition}{Proposition}
\newenvironment{proof}{{\emph {Proof.}}}{\hfill $\square$\par}
\begin{document}
%
% paper title
% Titles are generally capitalized except for words such as a, an, and, as,
% at, but, by, for, in, nor, of, on, or, the, to and up, which are usually
% not capitalized unless they are the first or last word of the title.
% Linebreaks \\ can be used within to get better formatting as desired.
% Do not put math or special symbols in the title.
\title{Minimum Snap Trajectory Generation and Control for an Under-actuated Flapping Wing Aerial Vehicle}
%
%
% author names and IEEE memberships
% note positions of commas and nonbreaking spaces ( ~ ) LaTeX will not break
% a structure at a ~ so this keeps an author's name from being broken across
% two lines.
% use \thanks{} to gain access to the first footnote area
% a separate \thanks must be used for each paragraph as LaTeX2e's \thanks
% was not built to handle multiple paragraphs
%

\author{Chen~Qian,
        Rui~Chen,
        Peiyao~Shen,
        Yongchun~Fang,
        Jifu~Yan,              
        and Tiefeng~Li
\thanks{This work was supported by China Postdoctoral Science Foundation Funded Project (Project No.:2022M722912),

C. Qian is with Interdisciplinary Innovation Research Centers, Intelligent Robotic Research Center, Zhejiang Laboratory, Hangzhou
311100, China (e-mail: qianc@zhejianglab.com).

R. Chen and T. Li are with School of Aeronautics and Astronautics, Zhejiang University, Hangzhou 310012, China (e-mail: 22124067@zju.edu.cn, litiefeng@zju.edu.cn).
T. Li is the corresponding author.

P. Shen is with Shanghai NIO Automobile Co., Ltd., Shanghai, 201804, China (e-mail: shenpy3631@outlook.com).

Y. Fang and J. Yan are with College of Artificial Intelligence, Nankai University, and Institute of Robotics and Automatic Information
Systems, Nankai University, Tianjin, 300353, China (e-mail: fangyc@nankai.edu.cn, yanjf2000@mail.nankai.edu.cn).

% The authors are with the Institute of Robotics and Automatic Information
% Systems, College of Artificial Intelligence, and also with the
% Tianjin Key Laboratory of Intelligent Robotics, Nankai University, Tianjin
% 300353, China (e-mail: chainplain@mail.nankai.edu.cn; fangyc@nankai.edu.cn).
}
}

% note the % following the last \IEEEmembership and also \thanks - 
% these prevent an unwanted space from occurring between the last author name
% and the end of the author line. i.e., if you had this:
% 
% \author{....lastname \thanks{...} \thanks{...} }
%                     ^------------^------------^----Do not want these spaces!
%
% a space would be appended to the last name and could cause every name on that
% line to be shifted left slightly. This is one of those "LaTeX things". For
% instance, "\textbf{A} \textbf{B}" will typeset as "A B" not "AB". To get
% "AB" then you have to do: "\textbf{A}\textbf{B}"
% \thanks is no different in this regard, so shield the last } of each \thanks
% that ends a line with a % and do not let a space in before the next \thanks.
% Spaces after \IEEEmembership other than the last one are OK (and needed) as
% you are supposed to have spaces between the names. For what it is worth,
% this is a minor point as most people would not even notice if the said evil
% space somehow managed to creep in.

% The paper headers
\markboth{PREPARING SUBMISSION, AUG 28TH 2023}%
{Shell \MakeLowercase{\textit{et al.}}: Bare Demo of IEEEtran.cls for IEEE Journals}
% The only time the second header will appear is for the odd numbered pages
% after the title page when using the twoside option.
% 
% *** Note that you probably will NOT want to include the author's ***
% *** name in the headers of peer review papers.                   ***
% You can use \ifCLASSOPTIONpeerreview for conditional compilation here if
% you desire.

% If you want to put a publisher's ID mark on the page you can do it like
% this:
%\IEEEpubid{0000--0000/00\$00.00~\copyright~2015 IEEE}
% Remember, if you use this you must call \IEEEpubidadjcol in the second
% column for its text to clear the IEEEpubid mark.

% use for special paper notices
%\IEEEspecialpapernotice{(Invited Paper)}

% make the title area
\maketitle

% As a general rule, do not put math, special symbols or citations
% in the abstract or keywords.
\begin{abstract}
This paper presents both the trajectory generation and tracking control strategies for an underactuated flapping wing aerial vehicle (FWAV).
First, the FWAV dynamics is analyzed in a practical perspective.
Then, based on these analyses, we demonstrate the differential flatness of the FWAV system, and develop a general-purpose trajectory generation
strategy.
Subsequently, the trajectory tracking controller is developed with the help of robust control and switch control techniques.
After that, the overall system asymptotic stability is guaranteed by Lyapunov stability analysis.
To make the controller applicable in real flight, we also provide several instructions.
Finally, a series of experiment results manifest the successful implementation of the proposed trajectory generation strategy and tracking control strategy.
This work firstly achieves the closed-loop integration of  trajectory generation and control for real 3-dimensional flight of an underactuated FWAV to a practical level.
\end{abstract}

% Note that keywords are not normally used for peerreview papers.
\begin{IEEEkeywords}
Flapping wing robot, trajectory generation, trajectory tracking, nonlinear control.
\end{IEEEkeywords}

% For peer review papers, you can put extra information on the cover
% page as needed:
% \ifCLASSOPTIONpeerreview
% \begin{center} \bfseries EDICS Category: 3-BBND \end{center}
% \fi
%
% For peerreview papers, this IEEEtran command inserts a page break and
% creates the second title. It will be ignored for other modes.
\IEEEpeerreviewmaketitle

\section{Introduction}
\IEEEPARstart{F}{lapping} wing flight, much like a midair acrobatic skill defying the grasp of gravity, bestows upon the flier a realm of unparalleled maneuverability and agility. Just as birds and insects effortlessly navigate the vast expanse of the sky, flapping wings empower robots to transcend the limitations of traditional fixed-wing or rotary-wing systems and gracefully fly.
These advantages have led to increasing interest in developing flapping wing systems for applications such as aerial surveillance, environmental monitoring, as well as search and rescue operations \cite{Hassanalian-2017, Decroon-2020, Wang-2022}.
However, due to the complex dynamics, achieving stable and efficient flight in flapping wing systems presents significant challenges \cite{Sane-2002, Chin-2016,Helbling-2018, Sihite-2020}.
To overcome these challenges, much endeavor has been devoted to modeling that are oriented towards real-world flapping wing flight missions \cite{Khan-2021, Wangs-2022,McGill-2022,Biswal-2019}.
These endeavors typically focus on a particular flight objective, such as regulating attitude or devising optimal flight paths. Consequently, it is crucial to develop a methodology that adequately addresses the intricacies arising from the underactuated and nonlinear characteristics inherent in flapping wings dynamics, which should be compatible with both trajectory generation and tracking requirements.

Trajectory generation, in conjunction with compatible trajectory tracking control, can be regarded as a pivotal challenge that significantly impedes the practical applicability of flapping wing robots.
However, the researches on the trajectory generation method for flapping wing flight are relatively rare comparing with those on the fixed-wing flight or the rotary-wing flight \cite{Hoff-2019}.
Several representative works are discussed below to provide a comprehensive overview of the current state of the field.
In \cite{Paranjape-2013}, the perching maneuver is achieved by the aerial robot through a combination of wing articulation and control algorithms. The algorithms use closed-loop motion planning and dynamic inversion techniques to ensure stability and precise control during the perching maneuver. The robot executes a pitch up with maximum upward elevator deflection to achieve rapid deceleration and flatten the flight path, leading to a successful perched landing. 
In order to make the Bat Bot (a bat-like flapping wing robot) to navigate and perform various tasks in shared environments, the authors of \cite{Hoff-2019}
propose a generalized approach that uses a model with direct collocation methods to plan dynamically feasible flight maneuvers.
Then in \cite{Hoff-2021}, a two-stage optimization routine to plan flapping flight trajectories is proposed. In order to achieve minimum effort spent
moving the hind limbs of the Bat Bot, they firstly use the fixed wing model to solve the optimization 
problem, then use this result as the initial guess and
subsequently use the flapping wing model.
In \cite{rod-2022}, the authors aim at high endurance flight of flapping wing robots. They propose a graph-based approach that builds a tree to search for dynamically feasible and energy-efficient trajectories. 

Although these trajectory generation approaches make innovative probes, they are regretfully limited for specific flapping wing robots or application scenario, such that many intriguing problems remain unveiled.
Focused on gliding maneuvering, the flapping wing dynamics are unsurprisingly not considered in \cite{Paranjape-2013}.
And the planning method provided in \cite{Hoff-2019} depends heavily on the the load cell data, which they use for selecting model parameters to improve modeling accuracy. 
Neither the strategy in \cite{Hoff-2019} nor in
\cite{Hoff-2021} incorporates the trajectory 
generation method with tracking control. 
This open loop flight fashion makes the flapping wing aerial vehicles (FWAVs) 
prone to external disturbances, and thus makes
them less practically applicable, especially for outdoor tasks.
Moreover, only 2D trajectories are considered in \cite{rod-2022}, which is therefore not suitable for many flapping wing robot practical applications. Furthermore, the obstacle avoidance problem is rarely considered. However, the ability to navigate safely and autonomously in complex environments is crucial for the practical applications of aerial vehicles  \cite{Kong-2021, Tijmons-2017, Ol-2008}. 

Once the desired trajectory is generated, the trajectory tracking control comes into play.
Controller proposed in \cite{Wissa-2020} includes an integral sliding mode control law that manipulates the FWMAV dynamics and provides robust performance against model uncertainties and external disturbance.
In \cite{He-2020}, a neural network based controller
is proposed with accurate trajectory tracking on the vertical plane. 
And a vector field based trajectory tracking controller is proposed in \cite{Ndoye-2023}, which is robust to various initial position and velocity conditions. In \cite{Fei-2023}, the authors realize a nonlinear flight controller onboard, which incorporates
parameter adaptation and robust control technique.
The controller developed by them effectively tackles a range of issues, including: mitigating the sensing challenges arising from significant oscillations, accounting for the highly nonlinear and unsteady aerodynamics associated with flapping wing motion, accommodating uncertainties in system parameters, and counteracting external disturbances.

The absence of real flight experimentation for the first two tracking controllers \cite{Wissa-2020,He-2020} unavoidably dilutes the persuasiveness of their results. 
Furthermore, the underactuated nature of specific FWAVs is not comprehensively addressed in the aforementioned works.
And, the conventional vector-aided controller \cite{Ndoye-2023} commonly employed in path-following problems \cite{Nelson-2007, Zhao-2018} fails to fulfill specific position requirements at precise time instances, primarily due to the independence between initial conditions and time. 
While this independence is often regarded as advantageous for enhancing robustness, it hinders the control ability to meet specific position and velocity requirements within predefined temporal constraints.

Based on the observations above, the key contributions of this work can be concluded into the following three folds:
\begin{enumerate}
  \item The theoretical bases for the trajectory planning of the studied underactuated FWAV are established.
  \item A novel trajectory tracking controller, which is compatible with the trajectory generation strategy, is proposed in this study. 
  \item To the best of our knowledge, this study presents the first successful closed-loop integration of trajectory generation and control for real 3-dimensional flight of an underactuated FWAV.
\end{enumerate}

The remainder of this paper is organized as follows.
In section~\uppercase\expandafter{\romannumeral2}, we analyze the FWAV dynamics for further planning and control tasks.
Then the trajectory generation strategy and the trajectory tracking control strategy are presented in section~\uppercase\expandafter{\romannumeral3} and section~\uppercase\expandafter{\romannumeral4}, respectively.
After that, the real flight experiment results are provided and analyzed in section~\uppercase\expandafter{\romannumeral5}.
Finally, we conclude this work in section~\uppercase\expandafter{\romannumeral6}.

\section{Dynamics Analysis}
\subsection{Flapping Wing Dynamics}
Unit-quaternion has several advantages for representing attitude, including efficient interpolation between orientations, avoiding singularities (such as gimbal lock), and providing a compact representation.
The unit-quaternion $\bm q \in \mathcal S^3$ maps an attitude onto two elements on $\mathcal S^3$, which consists of a scalar part and a vector part, that is $\bm q = {\left[ {\begin{array}{*{20}{c}} \eta &{{\bm \epsilon ^\top}}\end{array}} \right]^\top}$. 
The conjugate quaternion of $\bm{q}$ represents as $\bm{q}^*={\left[ {\begin{array}{*{20}{c}} \eta &{-{\bm \epsilon ^\top}}\end{array}} \right]^\top}$, and  $R\left( \bm q \right) = R^\top\left( {  {\bm q}^*} \right)$.
Moreover, the rotation matrix can be expressed as 
\begin{align}
\label{eq:q2R}
R\left( \bm q \right) = I + 2\eta {\left[ \bm \epsilon  \right]_ \times } + 2\left[ \bm \epsilon  \right]_ \times ^2
\end{align}
where $I \in \mathbb R^{3 \times 3}$ is the identity matrix, and $\left[ \bm \star  \right]_ \times$ is the skew-symmetric matrix of $\bm \star \in \mathbb R^3$, such that 
\begin{align}
\label{eq:skew-symmetric}
{\left[ \bm \epsilon  \right]_ \times } = \left[ {\begin{array}{*{20}{c}}
0&{ - {\epsilon _3}}&{{\epsilon _2}}\\
{{\epsilon _3}}&0&{ - {\epsilon _1}}\\
{ - {\epsilon _2}}&{{\epsilon _1}}&0
\end{array}} \right]
\end{align}

Based on our previous study in \cite{Pre1}, the FWAV can be modeled by the following dynamics:
\begin{align}
\label{eq:ConDynamics}
\frac{{\rm{d}}}{{{\rm{d}}t}}\left[ {\begin{array}{*{20}{c}}
 \bm{p}\\
\bm v\\
 \bm{q}\\
{ \bm{\omega} }
\end{array}} \right] &= \left[ {\begin{array}{*{20}{c}}
{\bm v}\\
{-g{\bm{e}_3} +{\bm{q} \otimes } \frac{{{\bm{F}_{\rm drag}}}} {m} \otimes\bm{q}^*}\\
{\frac{1}{2}\bm{q}{ \otimes } {\bm{\omega}}} \\
{ - {{J}^{ - 1}}{(}\bm{\omega}  \times {{J} } \bm{\omega} )}
\end{array}} \right]\nonumber\\
 &+ \left[ {\begin{array}{*{20}{c}}
0\\
{\bm{q} \otimes \frac{{F}_{\rm thrust}\bm e_3} {m} \otimes {\bm{q}^*}}\\
0\\
{{{J}^{ - 1}}{\bm{\tau} _{\theta}}}
\end{array}} \right]
\end{align}
where the FWAV dynamics is modeled with the following states ${\left[ {\begin{array}{*{20}{c}}
{{\bm p^\top}}&{{\bm v^\top}}&{{\bm q^\top}}&{{\bm \omega ^\top}}&{{f_{\rm flap}}}&{{\theta _{\rm rud}}}&{{\theta _{\rm ele}}}
\end{array}} \right]^\top} \in \mathbb R^{16}$, 
which consists of the position $\bm p \in \mathbb R^3$, the velocity $\bm v \in \mathbb R^3$, 
the unit-quaternion $\bm q \in \mathcal S^3$, the angular velocity $\bm \omega$,
the flapping wing frequency ${f_{\rm flap}} \in \mathbb R^+$, 
the rudder deflection angle ${\theta_{\rm rud}} \in \mathbb R$,
and the elevator deflection angle ${\theta_{\rm ele}} \in \mathbb R$.
Both  $\bm p$ and $\bm v$ are represented in the inertia frame.
The symbol $g \in \mathbb R^+$ is the gravitational acceleration, 
and symbol $m \in \mathbb R^+$ is the mass of the FWAV.
The matrix $J \in \mathbb R ^{3 \times 3}$ is the inertia matrix. 
The vector $\bm F _{\rm drag} \in \mathbb R ^ 3$ is the aerodynamic drag force.
The scalar $ F _{\rm thrust} \in \mathbb R ^ +$ is the magnitude of the thrust force, meanwhile,
$\bm e_3 = {\left[ {\begin{array}{*{20}{c}}0&0&1\end{array}} \right]^\top} \in \mathbb R ^ 3$ is a unit vector.
Based on the flapping wing aerodynamics induced in \cite{Sane-2002}, 
we can conclude that the thrust can be modeled as
\begin{equation}
\label{eq:th-vs-flap}
{F_{\rm thrust}} = {k_{\rm tf}}f_{\rm flap}^2
\end{equation}
where $k_{\rm tf} \in \mathbb R^+$ is the thrust coefficient. 
Similarly, the drag force can be modeled as 
\begin{equation}
\label{eq:v-vs-drag}
{\bm F_{\rm drag}} = \left[ {\begin{array}{*{20}{c}}
{ - {k_{{\rm{d,x}}}}{\mathop{\rm sgn}} \left( {{}^B{v_x}} \right){}^Bv_x^2}\\
{ - {k_{{\rm{d,y}}}}{\mathop{\rm sgn}} \left( {{}^B{v_y}} \right){}^Bv_y^2}\\
{ - {k_{{\rm{d,z}}}}{\mathop{\rm sgn}} \left( {{}^B{v_z}} \right){}^Bv_z^2}
\end{array}} \right]
\end{equation}
where the positive constants ${k_{{\rm{d,x}}}}, {k_{{\rm{d,y}}}}, {k_{{\rm{d,z}}}} \in \mathbb R^+$ are the drag coefficients, $^Bv_x$, $^Bv_y$, $^Bv_z\in \mathbb R$ are the FWAV translational velocity along the three axes of the body fixed frame, respectively. Moreover, ${\rm sgn}(\star) \in \mathbb R \to \{-1,1\}$ is the signum function.

When the deflection angle is constrained within a relatively small range, the torque produced $\bm \tau _\theta \in \mathbb R^3$ can be represented by the following mathematical expression:
\begin{equation}
\label{eq:tau_theta}
{\bm \tau _\theta } = \left[ {\begin{array}{*{20}{c}}
{ - \left( {{k_{{\rm{\tau ,x}}}}{\mathop{\rm sgn}} \left( {{}^B{v_z}} \right){}^Bv_x^2 + {k_{{\rm{flap,x}}}}f_{{\rm{flap}}}^2} \right){\theta _{{\rm{rud}}}}}\\
{ - \left( {{k_{{\rm{\tau ,y}}}}{\mathop{\rm sgn}} \left( {{}^B{v_z}} \right){}^Bv_x^2 + {k_{{\rm{flap,y}}}}f_{{\rm{flap}}}^2} \right){\theta _{{\rm{ele}}}}}\\
{ - \left( {{k_{{\rm{\tau ,z}}}}{\mathop{\rm sgn}} \left( {{}^B{v_z}} \right){}^Bv_x^2 + {k_{{\rm{flap,z}}}}f_{{\rm{flap}}}^2} \right){\theta _{{\rm{rud}}}}}
\end{array}} \right]
\end{equation}
where ${k_{{\rm{\tau,x}}}}$, ${k_{{\rm{\tau,y}}}}$, ${k_{{\rm{\tau,z}}}} \in \mathbb R$ are the velocity-induced torque coefficients, and ${k_{{\rm{flap,x}}}}$, ${k_{{\rm{flap,y}}}}$, ${k_{{\rm{flap,z}}}} \in \mathbb R$ are the flapping-wing-induced torque coefficients. 

Furthermore, in accordance with the simplification presented in \cite{Penicka-2022},
the flapping wing frequency and the deflection angles are modeled as first-order systems:
\begin{align}
\label{eq:firstflap}
&{{\dot f}_{{\rm{flap}}}} = {{\left( {{f_{{\rm{flap,c}}}} - {f_{{\rm{flap}}}}} \right)} \mathord{\left/
 {\vphantom {{\left( {{f_{{\rm{flap,c}}}} - {f_{{\rm{flap}}}}} \right)} {{k_{{\rm{flap,c}}}}}}} \right.
 \kern-\nulldelimiterspace} {{k_{{\rm{flap,c}}}}}}\\
 \label{eq:thetarud}
&{{\dot \theta }_{{\rm{rud}}}} = {{\left( {{\theta _{{\rm{rud,c}}}} - {\theta _{{\rm{rud}}}}} \right)} \mathord{\left/
 {\vphantom {{\left( {{\theta _{{\rm{rud,c}}}} - {\theta _{{\rm{rud}}}}} \right)} {{k_{{\rm{rud,c}}}}}}} \right.
 \kern-\nulldelimiterspace} {{k_{{\rm{rud,c}}}}}}\\
  \label{eq:thetaele}
&{{\dot \theta }_{{\rm{ele}}}} = {{\left( {{\theta _{{\rm{ele,c}}}} - {\theta _{{\rm{ele}}}}} \right)} \mathord{\left/
 {\vphantom {{\left( {{\theta _{{\rm{ele,c}}}} - {\theta _{{\rm{ele}}}}} \right)} {{k_{{\rm{ele,c}}}}}}} \right.
 \kern-\nulldelimiterspace} {{k_{{\rm{ele,c}}}}}}
\end{align}
where ${f_{{\rm{flap,c}}}}$, ${\theta _{{\rm{rud,c}}}}$, ${\theta _{{\rm{ele,c}}}} \in \mathbb R$ are the commanded inputs, and the ${k_{{\rm{flap,c}}}}$, ${k _{{\rm{rud,c}}}}$, ${k _{{\rm{ele,c}}}} \in \mathbb R^+$ are the corresponding time constants.

\begin{figure}[t]
\centering
\includegraphics[width=2.8in]{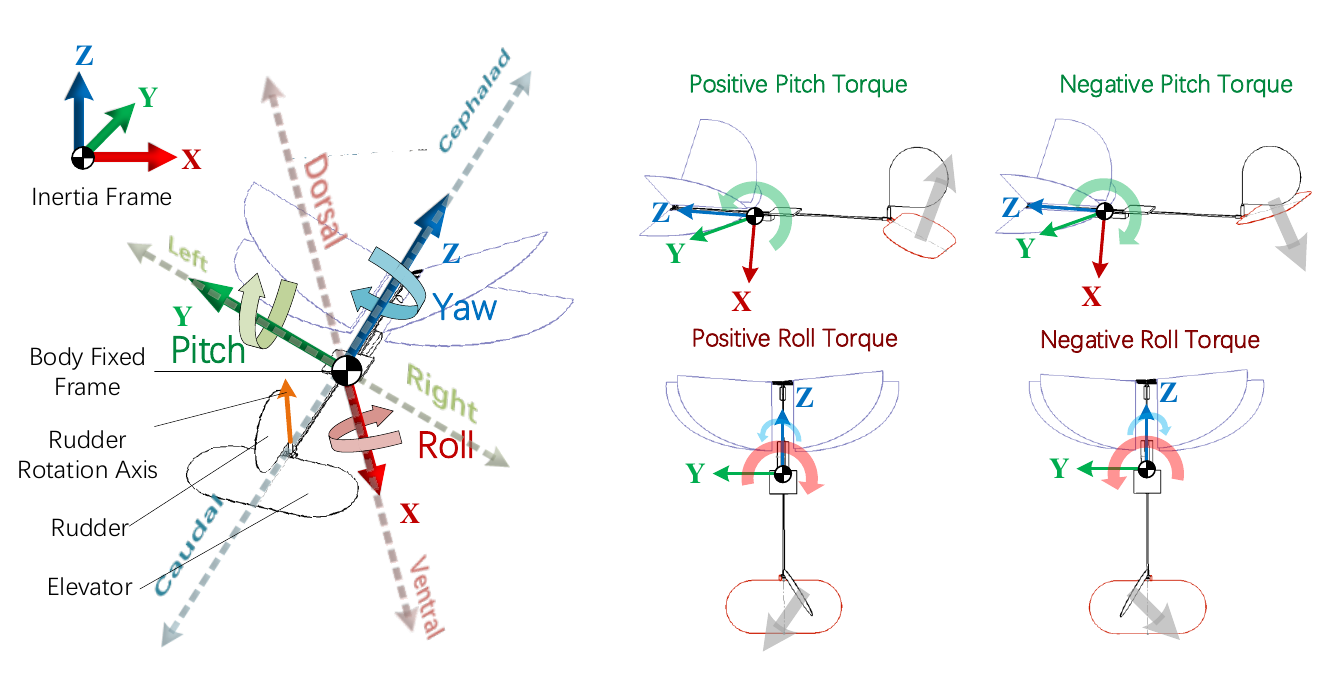}
%ReviseAttack.eps
\caption{Schematic of the developed FWAV: three-dimensional body fixed frame consists of three orthogonal axes, where the X-axis is red, Y-axis green, Z-axis blue.
The gray arrows are the average forces generated by the tails.
When the elevator undergoes a rotation towards the ventral side, the robot manifests a positive pitch torque. Conversely, a rotation towards the dorsal side of the elevator results in the generation of negative pitch torque.
Upon the rotation of the rudder towards the right side, a positive roll torque is engendered, accompanied by a relatively smaller magnitude of negative yaw torque, primarily attributed to a shorter force arm.}
\label{figure:Frames}
\end{figure}

\subsection{Vertical Frame Dynamics}

Firstly, due to the dorsal installation of the rudder, concomitant with the generation of roll torque, it possesses the capability to produce yaw torque. Despite the relatively modest magnitude of the yaw torque, it has the potential to exert an influence on the rotation of FWAV, which is shown in Fig. \ref{figure:Frames}.

Secondly, the wind-vane-like dynamics can also generate yaw torque \cite{Pre1}. 
The average positions of the flapping wings are shown as a ``V'' shape.
To this end, when in normal forward flight, 
the generated yaw torque intends to maintain the flight velocity within or in close proximity to the $X$-$Z$ plane by effecting a rotation of the robot in the yaw direction, which will be discussed in Section~\uppercase\expandafter{\romannumeral 4}.

\begin{figure}[t]
\centering
\includegraphics[width=2.6in]{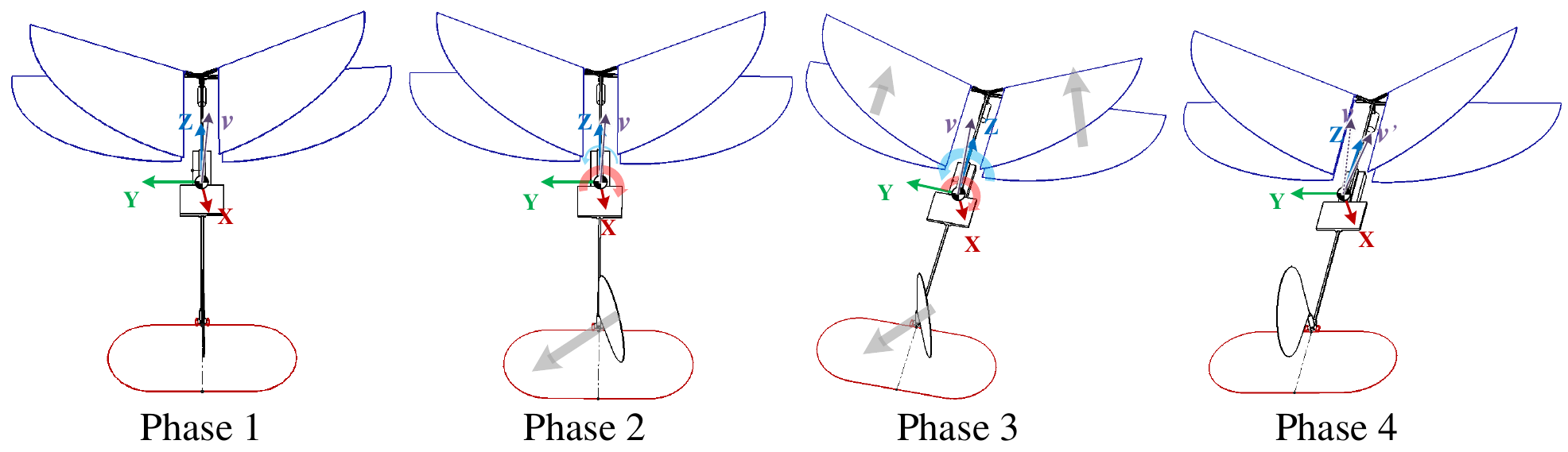}
%ReviseAttack.eps
\caption{Rudder deflection effects of the developed FWAV: the right rotation is demonstrated in this scenario, 
which is divided into 4 phases. We adhere to the axes notation presented in Fig. \ref{figure:Frames}. The translational velocities before and after are denoted as $\bm v$ and $\bm v'$ with color of violet.} 
\label{figure:Rotate}
\end{figure}

The scenario wherein the robot executes a rotation of its rudder during forward flight is demonstrated
in Fig. \ref{figure:Rotate}.
In phase 1, the FWAV exhibits forward flight with a velocity situated within the $X$-$Z$ plane of the body-fixed frame.
During phase 2, the rudder generates both roll torque and yaw torque. The roll torque predominates, resulting in faster manifestation of its effects and leading to phase 2.
Phase 3 marks the emergence of wind-vane-like dynamics as the predominant factor influencing the flight behavior of the FWAV. This dominance initiates rotation around the $Z$-axis of the body-fixed frame.
Subsequently, in phase 4, the rudder rotates back to its neutral position, and the yaw torque generated by the wind-vane-like dynamics causes the $Y$-axis of the body-fixed frame rotates back to the horizontal plane, that is the $X$-$Y$ plane of the inertia frame.
% After this transition, the rudder rotates back.
The cumulative effect of these sequential behaviors culminates in rotation around the $Z$-axis of the inertia frame, and in the meantime, the translational velocity $\bm v$ undergoes a gradual transition to  $\bm v'$.

Based on this observation, using a novel frame called the vertical frame introduced in \cite{Qian-2022} and the dynamics model in \eqref{eq:ConDynamics},
% This frame is denoted as the non-holonomic frame.
the simplified dynamics of the FWAV in the vertical frame is give by
\begin{align}
 &\dot {\bm p} = {\bm v},\label{eq:vertical1}\\
 &{\bm v} = R\left(  \psi  \right){}^V\bm v,\label{eq:vertical2}\\
 &{{}^V{\dot v}_x} = {}^V F_{\rm th,x}/m+ {}^V F_{\rm d,x}/m-\omega_\psi {}^Vv_y,\label{eq:vertical3}\\
 &{{}^V{\dot v}_y} = {}^V F_{\rm th,y}/m+ {}^V F_{\rm d,y}/m+\omega_\psi {}^Vv_x,\label{eq:vertical4}\\
 &{{}^V{\dot v}_z} = {}^V F_{\rm th,z}/m + {}^V F_{\rm d,z}/m -g,\label{eq:vertical5}\\
 &\dot \psi  = \omega_\psi, \label{eq:vertical6}\\
 &\dot\omega_\psi  = {}^V a_{\rm \tau,z} \label{eq:vertical7}
 \end{align}
where $^V\bm v = {\left[ {\begin{array}{*{20}{c}}
{{}^V{v_x}}&{{}^V{v_y}}&{{}^V{v_x}}
\end{array}} \right]^\top} \in \mathbb R^3$ are the translational velocity in the vertical frame,
${}^V F_{\rm th,x} \in \mathbb R$, ${}^V F_{\rm th,y} \in \mathbb R$, and ${}^V F_{\rm th,z} \in \mathbb R$ are the thrust components along the three axes of the vertical frame,
and ${}^V F_{\rm d,x} \in \mathbb R$, ${}^V F_{\rm d,y} \in \mathbb R$, and ${}^V F_{\rm d,z} \in \mathbb R$ are the corresponding drag components.
Moreover, $\psi \in \mathbb R$ is the angle of the rotation from the inertia frame to the vertical frame, namely the azimuth angle,
while $\omega_\psi \in \mathbb R$ is its angular velocity, and $R(\psi) \in {\rm SO(3)}$ is the corresponding 
rotation matrix, which is shown as
\begin{align}
R\left( \psi  \right) = \left[ {\begin{array}{*{20}{c}}
{\cos \psi }&{ - \sin \psi }&0\\
{\sin \psi }&{\cos \psi }&0\\
0&0&1
\end{array}} \right] \nonumber
\end{align}
Finally, ${}^V a_{\rm \tau,z} \in \mathbb R$ is the angular acceleration generated by the rudder, as well as the 
wind-vane-like dynamics. 

The term $R^\top\left( \bm q \right){\bm e_3}$, which is invariant of yaw motion, is denoted as $\bm \Gamma  = {\left[ {\begin{array}{*{20}{c}}
{{\Gamma _x}}&{{\Gamma _y}}&{{\Gamma _z}}
\end{array}} \right]^\top} \in \mathbb R^3$, which is regarded as the reduced attitude, and the corresponding reduced attitude control problem is extensively studied in \cite{Pre1}.

The components of the thrust along the $X$-axis of the vertical frame, and the one along the $Z$-axis of the vertical frame are 
\begin{align}
\label{eq:thrustx}
{}^V{F_{{\rm{th,x}}}} = -{k_{{\rm{tf}}}}f_{{\rm{flap}}}^2\Gamma_x\\
% \label{eq:thrusty}
% {}^V{F_{{\rm{th,y}}}} = {k_{{\rm{tf}}}}f_{{\rm{flap}}}^2\Gamma_y\\
\label{eq:thrustz}
{}^V{F_{{\rm{th,z}}}} = {k_{{\rm{tf}}}}f_{{\rm{flap}}}^2\Gamma_z
\end{align}
where $k_{\rm tf} \in \mathbb R^+$ is the thrust force coefficient, which can be straightforwardly identified in hovering flight.
% where $\bm e_3 = {\left[ {\begin{array}{*{20}{c}}1&0&0\end{array}} \right]^\top} \in \mathbb R ^ 3$ is a unit vector.
% It is noteworthy that, ${}^V{F_{{\rm{th,z}}}}$ is neglected due to the loosely non-holonomic constraint.

Given the intended stability of the FWAV forward flight, it can be observed that the drag force in the vertical frame exhibits a comparable pattern to \eqref{eq:v-vs-drag}, which is given by
\begin{equation}
\left[ {\begin{array}{*{20}{c}}
{{}^V{F_{{\rm{d,x}}}}}\\
{{}^V{F_{{\rm{d,y}}}}}\\
{{}^V{F_{{\rm{d,z}}}}}
\end{array}} \right] = \left[ {\begin{array}{*{20}{c}}
{ - {}^V{k_{{\rm{d}},{\rm{x}}}}{\rm{sgn}}\left( {{}^V{v_x}} \right){}^Vv_x^2}\\
{ - {}^V{k_{{\rm{d}},{\rm{y}}}}{\rm{sgn}}\left( {{}^V{v_y}} \right){}^Vv_y^2}\\
{ - {}^V{k_{{\rm{d}},{\rm{z}}}}{\rm{sgn}}\left( {{}^V{v_z}} \right){}^Vv_z^2}
\end{array}} \right]
\end{equation}
where the positive constants ${{}^Vk_{{\rm{d,x}}}}, {{}^Vk_{{\rm{d,y}}}}, {{}^Vk_{{\rm{d,z}}}} \in \mathbb R^+$ are the drag coefficients in the vertical frame, which can be identified in uniform speed forward flight.

Based on the torque model generated by the rudder \eqref{eq:tau_theta}, there exist the following equations:
\begin{align}
{}^V{a_{{\rm{\tau ,z}}}} = &\underbrace { - \left( {{}^V{k_{{\rm{\tau ,x}}}}{\mathop{\rm sgn}} \left( {{}^V{v_z}} \right){}^Vv_z^2 + {}^V{k_{{\rm{flap,x}}}}f_{{\rm{flap}}}^2} {\Gamma _z}\right){\theta _{{\rm{rud}}}}}_{{\rm{Rudder~induced~yaw~torque}}}\nonumber\\
 &+ \underbrace {{}^V{k_\Gamma }{\Gamma _y}{\mathop{\rm sgn}} \left( {{}^V{v_x}} \right){}^Vv_x^2}_{{\rm{Wind - vane - like~dynamics}}}\label{eq:atau}
 % \\
% {}^V{a_{{\rm{damp}}}} = & - {}^V{k_{{\rm{damp}}}}{\mathop{\rm sgn}} \left( {{\omega _\psi }} \right)\omega _\psi ^2 \label{eq:adamp}
 \end{align} 
 where ${}^V{k_{{\rm{\tau ,x}}}}$, ${}^V{k_{{\rm{flap,x}}}}$, ${}^V{k_\Gamma }\in \mathbb R^+$, 
 % and ${}^V{k_{{\rm{damp}}}} \in \mathbb R^+$ 
 are positive constants.

% From this juncture, we present a novel concept, that is loosely non-holonomic constraints.
Due to the fact that phase~3 shown in Fig. \ref{figure:Rotate} is relatively transient, 
and considering the fact that keeping a negligible left or right translational velocity can facilitate flight stability, a non-holonomic constraint can be put forward, which is
\begin{equation}
{}^V{\dot v_y} =  {\omega _\psi }{}^V{v_x} - {}^V{k_{{\rm{drag,y}}}}{\mathop{\rm sgn}} \left( {{}^V{v_y}} \right){}^Vv_y^2 \label{eq:non-holo}
\end{equation}
% The idea behind the development of this constraint is to provide a protective envelope where the vehicle can fly safely.
Although the transient phase~3 in Fig. \ref{figure:Rotate} occurs during real flight, the intricate dynamics can be effectively managed by the trajectory controller. 
Keeping $\omega_\psi$ small, in the vertical frame dynamics, we can further use an approximation of the steady-state of \eqref{eq:non-holo} as the simplified constraint, similar to the common kinematic constraint adopt in differential mobile robot \cite{Fan-2022}:
\begin{equation}
{}^V{\dot v_y} \equiv 0,~~{}^V{v_y} \equiv 0\label{eq:non-holo1}
\end{equation}
% The term ``loosely'' is attributed to this constraint due to its partial reliance on manual enforcement, coupled with a lack of rigid confinement by its inherent dynamics.

\section{Trajectory Generation}

\subsection{Differential Flatness}
 A system is said to be differentially flat if its state variables and control inputs can be uniquely determined by a set of flat outputs and their derivatives \cite{Han-2022}.
 The choice of the flat outputs are given by
 \begin{equation}
 \bm \sigma  = {\left[ {\begin{array}{*{20}{c}}
x&y&z&\psi 
\end{array}} \right]^\top} \label{eq:sigmaa}
 \end{equation}
 where $\bm p = {\left[ {\begin{array}{*{20}{c}}
x&y&z
\end{array}} \right]^\top} \in \mathbb R^3$ is the mass center position, 
and $\psi \in \mathcal S^1$ is the yaw angle of the vertical frame. 

On the other hand, the states of system are given by ${\left[ {x,y,z,{\Gamma _x},{\Gamma _y},{\Gamma _z},\psi ,{}^V{{\dot v}_x},{}^V{{\dot v}_y},{}^V{{\dot v}_z},{\omega _x},{\omega _y},{\omega _z}} \right]^\top}$.
The following deductions are predicated on the assumption that all coefficients are ascertainable.

First of all, we have 
 \begin{align}
\bm v &= {\left[ {\begin{array}{*{20}{c}}
{\dot x}&{\dot y}&{\dot z}
\end{array}} \right]^\top}\\
{}^V{\bm v} &= R ^\top\left( \psi  \right)\bm v
 \end{align}
 such that ${}^V{\bm v}$ can be determined.
And based on \eqref{eq:vertical6}, $\omega_\psi$ can be determined.
Since the computation of ${}^V{\bm v}$ are analytical, we can compute its derivative ${}^V\dot{\bm v}$, 
such that ${}^V\dot{\bm v}$ can be determined.

It is noteworthy that the rudder induced torque is negligible in \eqref{eq:atau}, 
such that \eqref{eq:vertical7} can be rewritten as  
 \begin{align}
  \label{eq:omegapsi}
{{\dot \omega }_\psi }{\rm{ = }}{}^V{k_\Gamma }{\Gamma _y}{\mathop{\rm sgn}} \left( {{}^V{v_x}} \right){}^Vv_x^2 - {}^V{k_{{\rm{damp}}}}{\mathop{\rm sgn}} \left( {{\omega _\psi }} \right)\omega _\psi ^2
 \end{align}

Based on \eqref{eq:vertical3}, \eqref{eq:vertical5}, \eqref{eq:thrustx}, \eqref{eq:thrustz}, and \eqref{eq:omegapsi}
then $f_{{\rm{flap}}}^{\rm{2}}{\Gamma _x}$, ${\Gamma _y}$, and $f_{{\rm{flap}}}^{\rm{2}}{\Gamma _z}$ can be determined.
Then considering the fact $\Gamma _x^2 + \Gamma _y^2 + \Gamma _z^2 = 1$ and ${f_{{\rm{flap}}}} \ge 0$, and further keeping the flapping frequency not negligible, then
${\Gamma _x}$, ${\Gamma _y}$, ${\Gamma _z}$, and ${f_{{\rm{flap}}}}$ are determined.

Then the reduced attitude can be obtained through the following equation:
\begin{align}
&{{\bm q}_e} = {{{\bm q_{er}}}}/{{\left\| {{\bm q_{er}}} \right\|}},\\
&{\bm q_{er}}\triangleq s_e\left[ {\begin{array}{*{20}{c}}
{{\bm \Gamma ^{\top}}{\bm e_3} + 1}&{\bm \Gamma  \times {\bm e_3}}
\end{array}} \right],
\end{align}
where ${s_e} \in \left\{ {1, - 1} \right\}$. 
Thus we can find the attitude shown as a rotation matrix:
\begin{equation}
R = R\left( \psi  \right)R\left( {{\bm q_e}} \right)
\end{equation}
To this end, the angular velocity can be computed through the following manner:
\begin{equation}
{\left[ \bm \omega  \right]_ \times } = \dot R{R^\top}
\end{equation}
where the angular velocities $\omega_x$, $\omega_y$, and $\omega_z$  can be extracted from the skew-symmetric matrix $\left[ \bm \omega  \right]_ \times $. After taking the derivative, the angular acceleration components $\dot\omega_x$, $\dot\omega_y$, and $\dot\omega_z$ can also be determined.

Considering \eqref{eq:ConDynamics} and \eqref{eq:tau_theta},
there are two observations of $\theta_{\rm rud}$. 
Since the yaw torque is negligible, the observation of the $X$-axis can be used to determine $\theta_{\rm rud}$. 
Similarly, $\theta_{\rm ele}$ can also be determined.

In conclusion, with the output $\bm \sigma$ and its first, second, third, and forth order derivatives, both the states and the system inputs can be determined. Provided that the vertical dynamics and the constraint \eqref{eq:non-holo} holds, the FWAV dynamics and the three inputs are differentially flat. 

Lastly, let us carefully examine the equation \eqref{eq:atau} and the underlying wind-vane-like dynamics, as well as the rudder deflection effects portrayed in Fig. \ref{figure:Rotate}. Neglecting the transient rotation dynamics, this dynamics leads to a loose non-holonomic-like constraint  
\begin{equation}
\psi  = \arctan\!2\left( {\dot x,\dot y} \right) \label{eq:psigen}
\end{equation}
Since the misalignment in the transient phase can be further handled by the controller discussed later, we can exclude $\psi$
and use a 3-dimensional flat outputs, which is shown as  
\begin{equation}
 \bm \sigma  = {\left[ {\begin{array}{*{20}{c}}
x&y&z
\end{array}} \right]^\top} \label{eq:sigmaa}
 \end{equation}

\subsection{Optimization}
The utilization of polynomial trajectories is inherently suitable for highly dynamic vehicles and robots \cite{Mellinger-2011}. 
The trajectory at the $\mu$-th segment is represented as
\begin{equation}
\label{eq:poly}
{\sigma _{\mu,j}}\left( t \right) = \sum\nolimits_{i = 0}^N {{c_i}} {t^i},~~j = 1,2,3,4,~~t \in \left( {0,T} \right]
\end{equation}
where $c_i$ are the coefficients of the polynomial, $N \in \mathbb Z^+$ is the order of the polynomial, and $\sigma_{\mu,1} = x$, $\sigma_{\mu,2} = y$, and $\sigma_{\mu,3} = z$, $T\in \mathbb R^+$ is the time duration of a polynomial segment.
To this end, every polynomial starts at the time $0$,
and ends at the time $T$.
With minimal abuse of notation, 
 $\sigma _{j}$ is used instead of $\sigma _{\mu,j}$ in the context of discussing a constraint that is uniformly applied across all segments.

\subsubsection{Continuity Constraint}
When using multiple segments polynomial trajectory planning, it is important for the preceding segment of the polynomial to seamlessly connect with the subsequent segment.
The continuity constraints are maintained for each individual segment, which can be formulated as 
 \begin{align}
 {{ \sigma }_{\mu,j}}\left( T \right) &= {{ \sigma }_{\mu + 1,j}}\left( 0 \right)\\
{{\dot \sigma }_{\mu,j}}\left( T \right) &= {{\dot \sigma }_{\mu + 1,j}}\left( 0 \right)\\
{{\ddot \sigma }_{\mu,j}}\left( T \right) &= {{\ddot \sigma }_{\mu + 1,j}}\left( 0 \right)\\
{\sigma ^{\left( 3 \right)}}_{\mu,j}\left( T \right) &= {\sigma ^{\left( 3 \right)}}_{\mu + 1,j}\left( 0 \right)
 \end{align}
 where ${{\dot \sigma }_{\mu,j}}$, ${{\ddot \sigma }_{\mu,j}} \in \mathbb R$, ${\sigma ^{\left(3 \right)}}_{\mu,j}$, with $j =1,2,3,4$, are the first, second, and third order of a specific entry of the trajectory, respectively.

\subsubsection{Boundary Constraint}
The incorporation of boundary constraints in trajectory generation tasks serves a crucial purpose within the context of ensuring the behaviors at the initiation and termination points. For example, with respect to the initial and terminal condition of the $x$-direction dynamics in the inertia frame, we have
 \begin{align}
&{\sigma _{1,1}}\left( 0 \right) = {x_s},~{{\dot \sigma }_{1,1}}\left( 0 \right) = {{\dot x}_s},~{{\ddot \sigma }_{1,1}}\left( 0 \right) = {{\ddot x}_s}\\
&{\sigma _{M,1}}\left( T \right) = {x_t},~{{\dot \sigma }_{M,1}}\left( T \right) = {{\dot x}_t},~{{\ddot \sigma }_{M,1}}\left( T \right) = {{\ddot x}_t}
 \end{align}
 where ${x_s}$, ${{\dot x}_s}$, and ${{\ddot x}_s} \in \mathbb R$ are the position, velocity, and acceleration boundary constraints at the initiation, respectively. Simultaneously, ${x_t}$, ${{\dot x}_t}$, and ${{\ddot x}_t} \in \mathbb R$ correspond to the termination $x$-directional states. And $M\in\mathbb Z^+$ is the number of the polynomial segments. These constraints can be applied to the dynamics along both $Y$-axis and $Z$-axis of the vertical frame.

\subsubsection{Kinodynamic Constraint}
When generating trajectories for the FWAV, it is important to consider not only its kinematic feasibility but also its dynamic feasibility. The horizontal velocity constraint of the FWAV is firstly considered:
 \begin{equation}
\sqrt {\dot \sigma _1^2\left( t \right) + \dot \sigma _2^2\left( t \right)}  \le {v_{{\rm{h,max}}}} \label{eq:vh-con}
 \end{equation}
 where $ {v_{{\rm{h,max}}}}\in \mathbb R^+$ is the maximum horizontal velocity.
The vertical velocity constraint is then considered:
 \begin{equation}
 \left\| {{{\dot \sigma }_3}\left( t \right)} \right\| \le {v_{{\rm{v,max}}}} \label{eq:vv-con}
 \end{equation}
 where $v_{{\rm{v,max}}}\in \mathbb R^+$ is the maximum vertical velocity.

%  Another important kinodynamics constraint is the azimuth angular velocity limitation, based on the constraint \eqref{eq:psigen}, we have
% \begin{equation}
% \arctan\!2\left( {{{\dot \sigma }_1},{{\dot \sigma }_2}} \right) \le {{\dot \psi }_{\max }}
% \end{equation}
%  where ${{\dot \psi }_{\max }}\in \mathbb R^+$ is the maximum azimuth angular velocity.
%  In order to keep the stable flight of the FWAV, 
%  the heading direction constraint should also be considered:
% \begin{equation}
% 1 - \cos {\theta _v}\cos \sigma_4  - \sin {\theta _v}\sin \sigma_4 = 0 \label{eq:hd-con}
% \end{equation}
% % where $\delta {\psi _{\max }}$ is the maximum heading direction deviation, 
% and the moving direction $\theta _v$ is defined as 
% \begin{equation}
% {\theta _v} = \arctan\!2\left( {{{\dot \sigma }_1},{{\dot \sigma }_2}} \right)
% \end{equation}
% where the function  $\arctan\!2(\star_1,\star_2) $ is a mathematical function that calculates the angle in radians between the positive $x$-axis and the point $(\star_1,\star_2)$.
% Due to this constraint, the development freedom of $\psi$ is exhausted, such that $\psi$ is completely determined by the derivative ${\dot \sigma }_1$ and ${\dot \sigma }_2$.

% It is worth noting that, if the non-holonomic constraint in last section is considered to be stringent, it can also be interpreted as a kinodynamic constraint.

\subsubsection{Obstacle Constraint}
Incorporating obstacle constraints is essential during trajectory generation as it ensures safe navigation by preventing collisions with obstacles in real-world environments, which is shown as
\begin{equation}
{\left[ {\begin{array}{*{20}{c}}
{{\sigma _1}}&{{\sigma _2}}&{{\sigma _3}}
\end{array}} \right]^{\top}} \in {D_{{\rm{ob}}}}
\end{equation}
where the obstacle domain ${D_{{\rm{ob}}}} \in \mathbb R^3$ can be formulated as a finite union of spheres \cite{Wang-2020} or polyhedrons \cite{Wolff-2014}.

\subsubsection{Objective}
Minimum snap trajectories ensure smooth and natural motion profiles for FWAVs by minimizing abrupt changes in acceleration, jerk, and higher-order derivatives.
The minimum snap objective is formulated as 
\begin{equation}
\label{eq:minsnap}
\min \int_0^{M \cdot T} {\left( {{\mu _p}{{\sum\nolimits_{j = 1}^3 {\left\| {\frac{{{{\rm{d}}^4}{\sigma _j}}}{{{\rm{d}}{{\rm{t}}^4}}}} \right\|} }^2} } \right)} {\rm{dt}}
\end{equation}
where $\mu_p\in \mathbb R^+$ is a positive constant, which is chosen to make the integrand
non-dimensional.
The constraints mentioned above can be exerted on this problem according to actual situations.
Due to the fact that the rudder and the elevator deflection angle, $\theta_{\rm rud}$ and $\theta_{\rm ele}$
relate to the fourth order derivative of the trajectory.
By setting a minimum snap objective, it is evident that lower control effort can be achieved in a straightforward manner. 
An additional significant advantage of incorporating a minimum snap objective is the preservation of stable aerodynamic conditions, which holds paramount importance in flapping wing flights.

\section{Trajectory Tracking}
Trajectory tracking control serves as a crucial means for achieving the desired trajectory generated for FWAVs. 
% By employing a robust control approach with prescribed performance, it is possible to effectively maintain the FWAV within its designated flight envelope, where the dynamics is governed by the vertical frame dynamics. 
% This not only ensures stable and reliable flight but also enables enhanced maneuverability and control capabilities.
Simultaneously, implementing the time-scale separation between the position dynamics and the attitude dynamics \cite{Pre1, Paranjape-2013}, we can 
develop and analyze the control strategies in the vertical frame.

\subsection{Controller Development}
According to Fig. \ref{figure:Rotate}, phase 2 is the only unsustainable mode, which is evanescent in most maneuvering. And from other phases, we can see that there is relatively fixed relationship between the rudder deflection angle and the reduced attitude component $\Gamma_y$, when the transient behaviors are neglected. To this end, the vertical frame yaw actuating torque \eqref{eq:atau} can be rewritten as 
\begin{align}
{}^V{a_{{\rm{\tau ,z}}}} = -{  \left( {{}^V{\bar k_{\Gamma}}{\mathop{\rm sgn}} \left( {{}^V{v_z}} \right){}^Vv_z^2 + {}^V{\bar k_{{\rm{flap,x}}}}f_{{\rm{flap}}}^2} {\Gamma _z}\right){\Gamma _{y}}}\label{eq:atau1}
 \end{align}
 where ${}^V{\bar k_{\Gamma}}$, ${}^V{\bar k_{{\rm{flap,x}}}} \in \mathbb R^+$ are positive constants. 

 Therefore, the vertical frame dynamics can then be seen as the outer-loop dynamics, where the reduced attitude $\Gamma$ is considered as the control input. 

 The position and velocity errors can both be defined in the inertia frame, which are given by
\begin{align}
{\bm e_p} &= {\bm p_d} - \bm p\label{eq:ep}\\
{\bm e_v} &= {\bm v_d} - {\bm v}\label{eq:ev}
 \end{align}
 where $\bm e_p, \bm e_v \in \mathbb R^3$, and $\bm p_d, \bm v_d \in \mathbb R^3$ are the desired 
 translational position and the velocity. The generated reference trajectory is denoted as $\bm \sigma_r, \dot{\bm \sigma}_r \in \mathbb R^3$, which are straightforwardly ${\left[ {\begin{array}{*{20}{c}}
{{\sigma _1}}&{{\sigma _2}}&{{\sigma _3}}
\end{array}} \right]^{\top}}$ and ${\left[ {\begin{array}{*{20}{c}}
{{\dot\sigma _1}}&{{\dot\sigma _2}}&{{\dot\sigma _3}}
\end{array}} \right]^{\top}}$. In addition $\bm p_d = \bm \sigma_r$, however, $\bm v_d \ne \dot{\bm \sigma}_r$, because  $\bm v_d$ should also consider the feedback errors.
The azimuth angle error is given by
\begin{equation}
{e_\psi } = \sqrt 2  - \sqrt {1 + \cos \left( {{\psi _d} - \psi } \right)}  \label{eq:epsi}
\end{equation}
where $ e_\psi \in \mathbb R^{+}\cup\{0\}$, and $ e_\psi = 0$ if and only if $\psi  = {\psi _d}$.
However, $\sigma_4$ cannot be used as $\psi _d$, but prior feedforward information at best, where feedback observation should also be considered.
Correspondingly, the angular velocity error is given by 
\begin{equation}
{e_{\omega \psi} } = {\omega _{\psi d}} - \omega_\psi \label{eq:omega}
\end{equation}
where ${e_{\omega \psi} }, {\omega _{\psi d}} \in \mathbb R$ are the azimuth angular velocity error, and the desired azimuth angular velocity, respectively.
And the equation ${{\dot \psi }_d} = {\omega _{\psi d}}$ does not necessarily holds, such that $\omega _{\psi d}$ can be freely designed.

With these errors defined, naturally, the following two Lyapunov candidate functions can be proposed:
\begin{align}
{V_{1}} = &\frac{1}{2}\bm e_p^\top{K_p^{-1}}{\bm e_p} + \frac{1}{2}\bm e_v^\top{K_v^{-1}}{\bm e_v}\\
 {V_{2}} =& {k_\psi^{-1} }\left( \sqrt 2  - \sqrt {1 + {\rm c} \left( {{\psi _d} - \psi } \right)}  \right) + \frac{1}{2}{k_\omega ^{-1}} e_{\omega \psi}^2
\end{align}
where $K_p$, $K_v$, and $K_\omega \in \mathbb R^{3 \times 3}$ are positive definite, diagonal matrices,
$k_\psi \in \mathbb R^+$ is a positive constant. 
Upon closer inspection, we can consider the positional subsystem formulated by \eqref{eq:vertical1}-\eqref{eq:vertical5} 
and the heading subsystem formulated by \eqref{eq:vertical6}-\eqref{eq:vertical7} can be viewed as the perturbed and the perturbing system of a cascade nonlinear system, which correspond to $V_{1}$ and $V_{2}$, respectively \cite{Panteley-1998}.
In the subsequent equations, ${\rm c}\psi$ and ${\rm s}\psi$ are the abbreviations for $\cos\psi$ and $\sin\psi$, respectively.

Consider the derivative of the first Lyapunov candidate $V_{1}$, there exists
\begin{align}
{{\dot V}_{1}} = &{\bm e}_p^\top K_p^{ - 1}{{\dot {\bm e}}_p} + {\bm e}_v^\top K_v^{ - 1}{{\dot {\bm e}}_v}\nonumber\\
 = &{\bm e}_p^\top K_p^{ - 1}\left( {{{\dot {\bm \sigma} }_r} - {\bm v}} \right) + {\left( {{{\bm v}_d} - {\bm v}} \right)^\top }K_v^{ - 1}\left( {{{\dot {\bm v}}_d} - \dot {\bm v}} \right)\nonumber\\
 = &{\bm e}_p^\top K_p^{ - 1}\left( {{{\dot {\bm \sigma} }_r} - {{\bm v}_d}} \right)\nonumber\\
 &+ {\left( {{{\bm v}_d} - {\bm v}} \right)^\top }\left[ {K_v^{ - 1}\left( {{{\dot {\bm v}}_d} - \dot {\bm v}} \right) + K_p^{ - 1}{{\bm e}_p}} \right]\label{eq:V11}
\end{align}
where $\bm v_d$ is designed as  
\begin{equation}
{\bm v_d} = {\dot {\bm \sigma} _r} + {K_p}\bm\tanh({\bm e_p} )\label{eq:vd}
\end{equation}
and we further expect that $\dot{\bm v}$ satisfies 
\begin{equation}
\dot {\bm v} = {{\dot {\bm v}}_d} + {K_v}K_p^{ - 1}{\bm\tanh({\bm e_p})} + {K_v}{\bm\tanh({\bm e_v})} \label{eq:dotv}
\end{equation}
and the right side is denoted as $\bm a_d \in \mathbb R^3$.
The saturation function $ {\mathbf {tanh}}({\bm \star}): \mathbb{R}^3 \to \mathbb{R}^3$ is a vector-valued function that applies the hyperbolic tangent function operation ${\rm tanh}(\star) $ to each element $\star$ of the input vector $\bm \star$, and subsequently combines the individual results.

Then the following desired values can be obtained:
\begin{align}
{\psi _d} &= \arctan 2\left( {{a_{xd}},{a_{yd}}} \right) \label{eq:psidd}\\
{}^V{{\dot v}_{xd}} &= \sqrt {a_{xd}^2 + a_{yd}^2} \label{eq:dvxd}\\
{}^V{{\dot v}_{zd}} &= {a_{zd}} \label{eq:dvzd}
\end{align}
where ${\bm a_d} = {\left[ {\begin{array}{*{20}{c}}
{{a_{xd}}}&{{a_{yd}}}&{{a_{zd}}}
\end{array}} \right]^\top}$.
The azimuth angle $\psi_d$ is provided in \eqref{eq:psidd}, and not shown as an explicit function of $\sigma_4$.

On the one hand, the $x$-directional component of  the reduced attitude $\Gamma_x$, the $z$-directional component  $\Gamma_z$,
and the flapping wing frequency $f_{\rm flap}$ are regarded as inputs of the positional subsystem in the 
vertical frame dynamics. 
On the other hand, the $y$-directional component of the reduced attitude $\Gamma_y$ and $\omega_\psi$ belong to the higher order heading subsystem.
Thus, \eqref{eq:vertical3} and \eqref{eq:vertical5} can be rewritten as 
\begin{align}
&{}^V{{\dot v}_x} = -{{{k_{{\rm{tf}}}}f_{{\rm{flap}}}^2{\Gamma _x}} \mathord{\left/
 {\vphantom {{{k_{{\rm{tf}}}}f_{{\rm{flap}}}^2{\Gamma _x}} m}} \right.
 \kern-\nulldelimiterspace} m} - {{{}^V{k_{{\rm{d,x}}}}{\mathop{\rm sgn}} \left( {{}^V{v_x}} \right){}^Vv_x^2} \mathord{\left/
 {\vphantom {{{}^V{k_{{\rm{d,x}}}}{\mathop{\rm sgn}} \left( {{}^V{v_x}} \right){}^Vv_x^2} m}} \right.
 \kern-\nulldelimiterspace} m}\\
&{}^V{{\dot v}_z} = {{{k_{{\rm{tf}}}}f_{{\rm{flap}}}^2{\Gamma _z}} \mathord{\left/
 {\vphantom {{{k_{{\rm{tf}}}}f_{{\rm{flap}}}^2{\Gamma _z}} m}} \right.
 \kern-\nulldelimiterspace} m} - {{{}^V{k_{{\rm{d,z}}}}{\mathop{\rm sgn}} \left( {{}^V{v_z}} \right){}^Vv_z^2} \mathord{\left/
 {\vphantom {{{}^V{k_{{\rm{d,z}}}}{\mathop{\rm sgn}} \left( {{}^V{v_z}} \right){}^Vv_z^2} m}} \right.
 \kern-\nulldelimiterspace} m} - g
\end{align}
When composing the desired reduced attitude, the component $\Gamma_y$ is considered to be $0$, such that $\Gamma _x^2 + \Gamma _z^2 = 1$.
Further considering the desire provided in \eqref{eq:dvxd} and \eqref{eq:dvzd}, nominal inputs of the positional subsystem can be consequently computed as
\begin{align}
f_{\rm flap}^2 &= k_{\rm tf}^{ - 1}m\sqrt {\dot v_{cx}^2 + \dot v_{cz}^2} \\
-{\Gamma _{xd}} &= {{{{\dot v}_{cx}}} \mathord{\left/
 {\vphantom {{{{\dot v}_{cx}}} {\sqrt {\dot v_{cx}^2 + \dot v_{cz}^2} }}} \right.
 \kern-\nulldelimiterspace} {\sqrt {\dot v_{cx}^2 + \dot v_{cz}^2} }}\\
{\Gamma _{zd}} &= {{{{\dot v}_{cz}}} \mathord{\left/
 {\vphantom {{{{\dot v}_{cz}}} {\sqrt {\dot v_{cx}^2 + \dot v_{cz}^2} }}} \right.
 \kern-\nulldelimiterspace} {\sqrt {\dot v_{cx}^2 + \dot v_{cz}^2} }}
\end{align} 
where the combined velocity changing rates ${{{\dot v}_{cx}}}$ and ${{{\dot v}_{cz}}} \in \mathbb R$ are given by
\begin{align}
{{\dot v}_{cx}} &= {}^V{{\dot v}_{xd}} + {{{}^V{k_{{\rm{d,x}}}}{\mathop{\rm sgn}} \left( {{}^V{v_x}} \right){}^Vv_x^2} \mathord{\left/
 {\vphantom {{{}^V{k_{{\rm{d,x}}}}{\mathop{\rm sgn}} \left( {{}^V{v_x}} \right){}^Vv_x^2} m}} \right.
 \kern-\nulldelimiterspace} m} \label{eq:vcx}\\
{{\dot v}_{cz}} &= {}^V{{\dot v}_{zd}} + {{{}^V{k_{{\rm{d,z}}}}{\mathop{\rm sgn}} \left( {{}^V{v_z}} \right){}^Vv_z^2} \mathord{\left/
 {\vphantom {{{}^V{k_{{\rm{d,z}}}}{\mathop{\rm sgn}} \left( {{}^V{v_z}} \right){}^Vv_z^2} m}} \right.
 \kern-\nulldelimiterspace} m} + g \label{eq:vcz}
\end{align}

% Then we consider the derivative of the second Lyapunov candidate $V_{2}$, the following equation are obtained:
% \begin{align}
% {{\dot V}_{2}} = &k_\psi ^{ - 1} \left( {s{\psi _d}c\psi  - c{\psi _d}s\psi } \right){{\dot \psi }_d} + k_\psi ^{ - 1}\left( {c{\psi _d}s\psi  - s{\psi _d}c\psi } \right)\dot \psi \nonumber\\
%  & + k_\omega ^{ - 1}{e_{\omega \psi }}\left( {{{\dot \omega }_{\psi d}} - {{\dot \omega }_\psi }} \right) \label{eq:dV12}
% \end{align}

% By considering the fact that
% \begin{equation}
% s{\psi _d}c\psi  - c{\psi _d}s\psi  = \sin \left( {{\psi _d} - \psi } \right)
% \end{equation}
% the derivative \eqref{eq:dV12} is shown as 
% \begin{align}
% {{\dot V}_{2}} =& k_\psi ^{ - 1}{\rm s} \left( {{\psi _d} - \psi } \right)\left( {{{\dot \psi }_d} - \dot \psi } \right) + k_\omega ^{ - 1}{e_{\omega \psi }}\left( {{{\dot \omega }_{\psi d}} - {{\dot \omega }_\psi }} \right)\nonumber\\
%  =& k_\psi ^{ - 1}{\rm s} \left( {{\psi _d} - \psi } \right)\left( {{{\dot \psi }_d} - {\omega _{\psi d}}} \right)\nonumber\\
%  &+ k_\omega ^{ - 1}\left( {{\omega _{\psi d}} - {\omega _\psi }} \right)\left( {{k_\omega }k_\psi ^{ - 1}{\rm s} \left( {{\psi _d} - \psi } \right) + {{\dot \omega }_{\psi d}} - {{\dot \omega }_\psi }} \right) \label{eq:dV2}
% \end{align}
In order to avoid the possible chattering at the state where ${\rm{s}}\left( {{\psi _d} - \psi } \right) = 0$ and ${\rm{c}}\left( {{\psi _d} - \psi } \right) =  - 1$, a hysteric term is introduced.
Thus, the Lyapunov candidate can be modified to the following form:
\begin{align}
{V_2} = &k_\psi ^{ - 1}\left( \sqrt 2  - {h_\psi }{\mathop{\rm sgn}} \left( {{\rm{s}}\left( {{\delta _\psi }} \right)} \right)\sqrt {1 + {\mathop{\rm c}\nolimits} \left( {{\delta _\psi }} \right)}  \right) \nonumber\\
&+ \frac{1}{2}k_\omega ^{ - 1}e_{\omega \psi }^2 \label{eq:newV2}
\end{align}
where we define the deviation $\delta _\psi = \psi_d - \psi$.
Accordingly, the corresponding derivative is given by
\begin{align}
{{\dot V}_{2}} =& \frac{{k_\psi ^{ - 1}}}{2}\frac{{{h_\psi }\left|{\rm{s}}\left( {{\delta _\psi }} \right)\right|}}{{\sqrt {1 + {\rm{c}}\left( {{\delta _\psi }} \right)} }}\left( {{{\dot \psi }_d} - {\omega _{\psi d}}} \right)\nonumber\\
 &+ k_\omega ^{ - 1}\left( {{\omega _{\psi d}} - {\omega _\psi }} \right)\left( {\frac{{k_\psi ^{ - 1}{h_\psi }\left|{\rm{s}}\left( {{\delta _\psi }} \right)\right|}}{{2\sqrt {1 + {\rm{c}}\left( {{\delta _\psi }} \right)} }} + {{\dot \omega }_{\psi d}} - {{\dot \omega }_\psi }} \right) \label{eq:newdV2}
\end{align}

Therefore, the control strategy ${\omega _{\psi d}}$ is designed as
\begin{equation}
{\omega _{\psi d}} = {{\dot \psi }_d} + k_\psi{h_\psi }{\sqrt {1 - {\rm{c}}\left( {{\delta _\psi }} \right)} }\label{eq:omegapsid}
\end{equation}
where the term ${h_\psi }\sqrt {1 - {\rm{c}}\left( {{\delta _\psi }} \right)}$ is 
used instead of ${h_\psi }{{\left| {{\rm{s}}\left( {{\delta _\psi }} \right)} \right|} \mathord{\left/
 {\vphantom {{\left| {{\rm{s}}\left( {{\delta _\psi }} \right)} \right|} {\sqrt {1 + {\rm{c}}\left( {{\delta _\psi }} \right)} }}} \right.
 \kern-\nulldelimiterspace} {\sqrt {1 + {\rm{c}}\left( {{\delta _\psi }} \right)} }}$ to circumvent the removable singularities therein at $\delta _\psi = \pi + 2 k \pi, k \in \mathbb Z$.

With $ \delta \in \mathbb R^+$ being a small positive constant, the hysteretic term $h_\psi$ dynamics can be formulated as
\begin{align}
\label {eq:HHlaw2}
&{h_\psi^ + } \in {\mathop{\rm \overline {{\mathop{\rm sgn}} } }} \left( {{{\mathop{\rm s}\nolimits} \left( {\delta _\psi } \right)}} \right),
\nonumber\\&~~~~\left(h_\psi{{\mathop{\rm s}\nolimits} \left( {\delta _\psi } \right)} \le  - \delta,~{\rm and}~{{\mathop{\rm c}\nolimits} \left( {\delta _\psi } \right)} \le  0 \right),~ {\rm or}~{{\mathop{\rm c}\nolimits} \left( {\delta _\psi } \right)} >  0 \\
\label {eq:HHlaw1}
&\dot h_\psi = 0,~~~~~~~~~~~~~~~~~~~~~~~~~~~~~~~~~~~~~~~~~{\rm otherwise}, 
\end{align}
where $h_\psi^ +$ represents the updated logic variable, and the set-valued function $\overline {{\mathop{\rm sgn}} } \left( \star \right)$ is defined as
\begin{align}
\label {eq:Setv}
\overline {{\mathop{\rm sgn}} } \left( \star \right) = \left\{ {\begin{array}{*{20}{c}}
   {{\mathop{\rm sgn}} \left( \star \right),\left| \star \right| > 0}  \\
   {\left\{ { - 1,1} \right\},\star = 0.}  \\
\end{array}} \right.
\end{align}
Suppose that the vehicle dynamics is constrained, such that the uncertain control gain shown in equation \eqref{eq:atau1} satisfies  
\begin{equation}
 {l_{\Gamma ,\min }} \le {{}^V{\bar k_{\Gamma}}{\mathop{\rm sgn}} \left( {{}^V{v_z}} \right){}^Vv_z^2 + {}^V{\bar k_{{\rm{flap,x}}}}f_{{\rm{flap}}}^2} {\Gamma _z} \le {l_{\Gamma ,\max }}
\end{equation}
which can be achieved by exerting constraints on both the velocity and input, such as the upper and the lower bounds ${l_{\Gamma ,\min }} $ and ${l_{\Gamma ,\max}} \in \mathbb R^+$, 
however, for the sake of conciseness, it is not addressed within the scope of this paper.
And we further develop the desire of the reduced attitude component $\Gamma_{yd}$ as 
\begin{align}
{\Gamma _{yd}} = &-(\frac{{{k_\omega }}}{{{l_{\Gamma ,\min}}}} - \frac{{{k_\omega }}}{{{l_{\Gamma ,\max}}}}) {\mathop{\rm sgn}} \left( {{\omega _{\psi d}} - {\omega _\psi }} \right) \nonumber\\& \cdot\left| {\frac{1}{2}k_\psi ^{ - 1}{h_\psi }\sqrt {1 - {\rm{c}}\left( {{\delta _\psi }} \right)}  + {{\dot \omega }_{\psi d}}} \right|\nonumber\\
 & - \frac{{{k_\omega }}}{{{l_{\Gamma ,\max }}}}\left( {\frac{1}{2}k_\psi ^{ - 1}{h_\psi }\sqrt {1 - {\rm{c}}\left( {{\delta _\psi }} \right)} + {{\dot \omega }_{\psi d}}} \right) \nonumber\\
 &- {k_\omega }\left( {{\omega _{\psi d}} - {\omega _\psi }} \right) \label{eq:gammay}
\end{align}
One can rely on the robust term  ${\mathop{\rm sgn}} \left( {{\omega _{\psi d}} - {\omega _\psi }} \right)$ to circumvent the exact feedforward compensation, however, $\frac{{{k_\omega }}}{{{l_{\Gamma ,\max }}}}\left( {{k_\omega }k_\psi ^{ - 1}{\rm{s}}\left( {\delta _\psi } \right)+ {{\dot \omega }_{\psi d}}} \right)$ is still invoked to alleviate the burden of the discontinuous term, such that the controller can avoid extremely high magnitude chattering of $\Gamma_{yd}$.

\subsection{Stability Analysis}
In this subsection, we verify the asymptotic stability of the positional subsystem \eqref{eq:vertical1}-\eqref{eq:vertical5}, and the stability of the heading subsystem \eqref{eq:vertical6}-\eqref{eq:vertical7},
then use the cascade system stability theory to analyze the overall system behavior.

\begin{Proposition}
Consider the positional subsystem \eqref{eq:vertical1}-\eqref{eq:vertical5}, equipped with the control strategy presented as \eqref{eq:vd} and \eqref{eq:dotv}, where the expectation \eqref{eq:dotv} is immediately satisfied,
that is, the input of the perturbing system, heading subsystem, is zero. Then the closed-loop positional subsystem is globally uniformly asymptotically stable, whose equilibrium point is $\bm e_p = \left[ {\begin{array}{*{20}{c}}0&0&0\end{array}} \right]^\top$ and $\bm e_v = \left[ {\begin{array}{*{20}{c}}0&0&0\end{array}} \right]^\top$. \label{Pro:positional1}
\end{Proposition}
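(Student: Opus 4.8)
The plan is to treat $V_1$ as a strict, global Lyapunov function for the \emph{autonomous} error system and then invoke the standard Lyapunov theorem to obtain global uniform asymptotic stability. First I would pin down the hypothesis: since the expectation \eqref{eq:dotv} is met exactly and the perturbing input $\Gamma_y$ of the heading subsystem is zero, the commanded acceleration is realized, $\dot{\bm v}=\bm a_d$, so the $\omega_\psi$-coupling terms in \eqref{eq:vertical3}--\eqref{eq:vertical4} drop out and the position/velocity loop is self-contained. Differentiating the errors \eqref{eq:ep}--\eqref{eq:ev} and substituting \eqref{eq:vd} yields the closed-loop error dynamics $\dot{\bm e}_p = -K_p\tanh(\bm e_p)+\bm e_v$, with $\dot{\bm e}_v$ fixed by \eqref{eq:dotv}; crucially, every reference term ($\dot{\bm\sigma}_r$, $\dot{\bm v}_d$) cancels, so the error system is time-invariant. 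This observation is what upgrades asymptotic stability to \emph{uniform} asymptotic stability for free.

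Next I would evaluate $\dot V_1$ starting from \eqref{eq:V11}. Using \eqref{eq:vd} we have $\dot{\bm\sigma}_r-\bm v_d=-K_p\tanh(\bm e_p)$, so the first term collapses to $-\bm e_p^\top\tanh(\bm e_p)$. The velocity law \eqref{eq:dotv} is then designed so that the bracketed factor multiplying $\bm e_v=\bm v_d-\bm v$ cancels the sign-indefinite cross-coupling $K_p^{-1}\bm e_p$ introduced by the backstepping step, leaving only a dissipative $-\tanh(\bm e_v)$ contribution. The target identity is
\[
\dot V_1 = -\bm e_p^\top \tanh(\bm e_p) - \bm e_v^\top \tanh(\bm e_v).
\]

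I would then verify the definiteness requirements. Because $K_p^{-1}$ and $K_v^{-1}$ are positive-definite diagonal matrices, $V_1$ is positive definite and, being quadratic in the errors, radially unbounded --- the property that delivers the global conclusion even though the control terms saturate. For the derivative I would use the scalar fact that $s\tanh s>0$ for all $s\ne 0$ and vanishes only at $s=0$; summing componentwise shows $\bm e_p^\top\tanh(\bm e_p)$ and $\bm e_v^\top\tanh(\bm e_v)$ are each positive definite, hence $\dot V_1$ is negative definite. The standard Lyapunov theorem then gives global asymptotic stability of the equilibrium $(\bm e_p,\bm e_v)=(\bm 0,\bm 0)$, and time-invariance of the error dynamics makes it uniform, establishing GUAS.

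The main obstacle I anticipate is the cross-term cancellation: one must confirm that \eqref{eq:dotv} is exactly the acceleration command that annihilates the indefinite term $\bm e_v^\top K_p^{-1}\bm e_p$ appearing in \eqref{eq:V11}, since any surviving residual of the form $\bm e_v^\top K_p^{-1}\left(\bm e_p-\tanh(\bm e_p)\right)$ is sign-indefinite and would destroy strict negative-definiteness; this algebra must be tracked carefully, and if such a residual persists I would fall back on LaSalle's invariance principle, which is legitimate here precisely because the error system is autonomous. A secondary check is that saturation does not compromise the global claim --- handled by leaning on radial unboundedness of $V_1$ rather than of its bounded restoring terms --- and that the zero-heading-input hypothesis genuinely removes the velocity-coupling terms so that $\dot{\bm v}=\bm a_d$ holds exactly.
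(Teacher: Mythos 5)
Your proof is essentially the paper's own: the same Lyapunov function $V_1$, the same substitution of \eqref{eq:vd} and \eqref{eq:dotv} into \eqref{eq:V11} to reach $\dot V_1 = -\bm e_p^\top{\bm \tanh}\left(\bm e_p\right)-\bm e_v^\top{\bm \tanh}\left(\bm e_v\right)$, and the same conclusion from positive definiteness, radial unboundedness, and negative definiteness of $\dot V_1$ (the paper cites \cite{Haddad-2018}; your explicit appeal to time-invariance of the error dynamics for uniformity is a welcome clarification the paper omits). The obstacle you anticipated is, in fact, real: with \eqref{eq:dotv} written using $K_vK_p^{-1}{\bm \tanh}\left(\bm e_p\right)$ rather than $K_vK_p^{-1}\bm e_p$, the bracketed factor in \eqref{eq:V11} leaves precisely the sign-indefinite residual $\bm e_v^\top K_p^{-1}\left(\bm e_p-{\bm \tanh}\left(\bm e_p\right)\right)$ you describe, which the paper's \eqref{eq:dV11new} silently drops, so your proposed repairs (redefining the cross term as $K_vK_p^{-1}\bm e_p$, or a LaSalle-type argument on the autonomous error system) are exactly what is needed to make the step rigorous.
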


\begin{proof}
Let us consider the behavior of the Lyapunov candidate $V_{1}$.
It is obvious that $V_{1} = 0$, if and only if $\bm e_p = \left[ {\begin{array}{*{20}{c}}0&0&0\end{array}} \right]^\top$ and $\bm e_v = \left[ {\begin{array}{*{20}{c}}0&0&0\end{array}} \right]^\top$.
And there exists a class $\mathcal K$ function $\alpha(\bm \star)$, such that $\alpha(\bm e_p, \bm e_v) \le V_{1}$, for example $\lambda_{1} V_{1}$ with $0 < \lambda_{1} < 1$.
Substituting \eqref{eq:vd} and \eqref{eq:dotv} into \eqref{eq:V11}, the derivative of \eqref{eq:V11} becomes 
% \section{Simulations }
\begin{equation}
{\dot V_{1}} =  - {\bm e}_p^\top{\bm \tanh} \left( {{{\bm e}_p}} \right) - {\bm e}_v^\top{\bm \tanh} \left( {{{\bm e}_v}} \right) \label{eq:dV11new}
\end{equation}
where ${\dot V_{1}} < 0$ for $\forall {\bm e_v},{\bm e_p} \in \mathbb R^3 $ except the origin, $\bm e_p = \left[ {\begin{array}{*{20}{c}}0&0&0\end{array}} \right]^\top$ and $\bm e_v = \left[ {\begin{array}{*{20}{c}}0&0&0\end{array}} \right]^\top$.
On the other hand, $V_{1}$ is continuously differentiable, and $\dot V_{1}$ is radially infinite. Based on these observations, it can be concluded that, the positional subsystem is globally uniformly asymptotically stable \cite{Haddad-2018}, and it follows that ${\bm e_p} \to \left[ {\begin{array}{*{20}{c}}0&0&0\end{array}} \right]^\top$, 
and ${\bm e_v} \to \left[ {\begin{array}{*{20}{c}}0&0&0\end{array}} \right]^\top$, as $t \to \infty$.
\end{proof}

Then the behavior of $V_{2}$ is investigated. The discrete nature of the heading subsystem is injected
by both the hysteretic term $h_\psi$ and the robust term ${\mathop{\rm sgn}} \left( {{\omega _{\psi d}} - {\omega _\psi }} \right)$. The discontinuous dynamics elicited by the robust term is considered to be the flows, because the discontinuity emerges in the derivative of system states, thus takes effect only when time evolves. On the other hand, the jumps elicited by the hysteretic term is rendered as the jumps of the hybrid system.
Please notice that the changes of $h_\psi$ when ${{\mathop{\rm c}\nolimits} \left( {\delta _\psi } \right)} >  0$ does not invoke any jumps, since ${\sqrt {1 - {\rm{c}}\left( {{\delta _\psi }} \right)} } = 0$ there.

\begin{Proposition}
Consider the heading subsystem \eqref{eq:vertical6}-\eqref{eq:vertical7}, equipped with the control strategy presented as \eqref{eq:omegapsid} and \eqref{eq:gammay}.
If the absolute value of the derivative of the heading direction $ \left| {{{\dot \psi }_d}} \right|$ is always bounded, then the equilibrium point of the closed loop hybrid heading subsystem is globally practically asymptotically stable. 
And the equilibrium points is shown as $\delta_\psi = 0$ and $\omega = \omega_d$.
\label{Pro:positional2}
\end{Proposition}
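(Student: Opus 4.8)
The plan is to treat the closed-loop heading subsystem as a hybrid dynamical system whose continuous state is $(\delta_\psi, e_{\omega\psi})$ and whose discrete state is the logic variable $h_\psi$, with flow and jump behaviour dictated by \eqref{eq:HHlaw2}--\eqref{eq:HHlaw1}, and to use $V_2$ of \eqref{eq:newV2} as a hybrid Lyapunov function. First I would verify that $V_2$ is a legitimate candidate: on the relevant branch $h_\psi{\rm sgn}({\rm s}(\delta_\psi)) = 1$ it is nonnegative, vanishes precisely at $\delta_\psi = 0$, $e_{\omega\psi}=0$, and admits class-$\mathcal K_\infty$ lower and upper bounds, so that stability of $V_2$ translates into stability of the target set $\{\delta_\psi = 0,\ \omega_\psi = \omega_{\psi d}\}$.

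Next, the jump analysis. Across a jump $h_\psi$ is reset to ${\rm sgn}({\rm s}(\delta_\psi))$, whereas the first branch of the jump condition requires $h_\psi{\rm s}(\delta_\psi)\le-\delta$ with ${\rm c}(\delta_\psi)\le 0$, forcing the pre-jump product $h_\psi{\rm sgn}({\rm s}(\delta_\psi)) = -1$. Hence the term $-h_\psi{\rm sgn}({\rm s}(\delta_\psi))\sqrt{1+{\rm c}(\delta_\psi)}$ switches from $+\sqrt{1+{\rm c}(\delta_\psi)}$ to $-\sqrt{1+{\rm c}(\delta_\psi)}$ while $e_{\omega\psi}$ is unaffected, so $V_2^+ - V_2^- = -2k_\psi^{-1}\sqrt{1+{\rm c}(\delta_\psi)}\le 0$. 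Thus $V_2$ never increases at jumps, and the strictly positive hysteresis width $\delta$ rules out Zeno behaviour, securing completeness of the maximal solutions.

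Then the flow analysis, which is the crux. Substituting the heading law \eqref{eq:omegapsid} into the first line of \eqref{eq:newdV2} and using the identity $|{\rm s}(\delta_\psi)|/\sqrt{1+{\rm c}(\delta_\psi)} = \sqrt{1-{\rm c}(\delta_\psi)}$ together with $h_\psi^2 = 1$ collapses the first term to $-\tfrac12\left(1-{\rm c}(\delta_\psi)\right)$, which is negative semidefinite in $\delta_\psi$. For the second term I would insert $\dot\omega_\psi = -g_\Gamma\Gamma_{yd}$, where $g_\Gamma$ abbreviates the uncertain gain of \eqref{eq:atau1} bounded by $l_{\Gamma,\min}\le g_\Gamma\le l_{\Gamma,\max}$, and then substitute the robust law \eqref{eq:gammay}. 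The linear feedback $-k_\omega e_{\omega\psi}$ yields $-g_\Gamma e_{\omega\psi}^2 \le -l_{\Gamma,\min}e_{\omega\psi}^2$, while the partial feedforward cancellation $-\tfrac{k_\omega}{l_{\Gamma,\max}}A$ and the discontinuous term $-(\tfrac{k_\omega}{l_{\Gamma,\min}}-\tfrac{k_\omega}{l_{\Gamma,\max}}){\rm sgn}(e_{\omega\psi})|A|$, with $A \triangleq \tfrac12 k_\psi^{-1}h_\psi\sqrt{1-{\rm c}(\delta_\psi)}+\dot\omega_{\psi d}$, are arranged so that for every admissible $g_\Gamma$ the residual cross terms proportional to $e_{\omega\psi}A$ are over-bounded by the discontinuous term and contribute nonpositively. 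The upshot is an estimate $\dot V_2 \le -\tfrac12(1-{\rm c}(\delta_\psi)) - l_{\Gamma,\min}e_{\omega\psi}^2 + \varepsilon$, negative definite outside a neighbourhood of the equilibrium set.

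Finally I would assemble the two pieces via a hybrid Lyapunov invariance argument: since $V_2$ does not increase at jumps and strictly decreases along flows outside a residual set, every maximal solution converges to a neighbourhood of $\{\delta_\psi = 0,\ e_{\omega\psi}=0\}$ whose size is governed by the assumed bound on $|\dot\psi_d|$ entering through $\dot\omega_{\psi d}$, which is exactly global practical asymptotic stability. I expect the flow step to be the main obstacle: verifying that the robust gains in \eqref{eq:gammay}, interacting with the unknown $g_\Gamma\in[l_{\Gamma,\min},l_{\Gamma,\max}]$ and the $k_\omega^{-1}$ weighting in $\dot V_2$, uniformly dominate the uncertain cross terms, and pinning down the residual $\varepsilon$ so that it shrinks with the $|\dot\psi_d|$ bound rather than leaving a fixed offset. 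A secondary subtlety is to treat the discontinuous right-hand side rigorously through Filippov solutions so that the ${\rm sgn}(e_{\omega\psi})$ domination remains valid on the sliding surface $e_{\omega\psi}=0$.
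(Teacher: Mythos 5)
Your overall framework -- modelling the closed loop as a hybrid system with $h_\psi$ as the logic state, using $V_2$ of \eqref{eq:newV2} as a hybrid Lyapunov function, and splitting the analysis into flows and jumps -- is exactly the paper's route, and your flow computation is essentially right. In fact you are too pessimistic about the flow step: the robust term in \eqref{eq:gammay} is sized so that $\left(1-g_\Gamma/l_{\Gamma,\max}\right)e_{\omega\psi}A$ is dominated by $g_\Gamma\left(l_{\Gamma,\min}^{-1}-l_{\Gamma,\max}^{-1}\right)\left|e_{\omega\psi}\right|\left|A\right|$ for every $g_\Gamma\in[l_{\Gamma,\min},l_{\Gamma,\max}]$, so the cross terms cancel exactly and the paper obtains a clean negative-definite bound \eqref{eq:dV2flow} with no residual $\varepsilon$.

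The genuine gap is in your jump analysis, and it is not a technicality -- it is where the ``practical'' in the claim comes from. You assert that $e_{\omega\psi}$ is unaffected by a jump, but $\omega_{\psi d}$ in \eqref{eq:omegapsid} contains the term $k_\psi h_\psi\sqrt{1-{\rm c}(\delta_\psi)}$, so when $h_\psi$ flips sign the error $e_{\omega\psi}=\omega_{\psi d}-\omega_\psi$ jumps by $2k_\psi\sqrt{1-{\rm c}(\delta_\psi)}$ in magnitude; on the jump set ${\rm c}(\delta_\psi)=-\sqrt{1-\delta^2}<0$, so this increment is not small, and the quadratic term $\tfrac12 k_\omega^{-1}e_{\omega\psi}^2$ in $V_2$ can \emph{increase} across a jump. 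The paper's expression \eqref{eq:jumpchageV2} for $V_2^+-V_2$ therefore carries a positive contribution proportional to $\left|\dot\psi_d\right|\left|\omega_\psi\right|$, and the decrease of $V_2$ at jumps only holds under the velocity bound \eqref{eq:omegapsicon}. The remainder of the paper's proof is devoted to showing that this bound is preserved for all initial conditions with $\left|\omega_\psi(0)\right|<\omega_{\psi,\max}$, and that $\omega_{\psi,\max}$ can be made arbitrarily large by taking $k_\omega$ large and $k_\psi$ small -- which is precisely the global \emph{practical} asymptotic stability being claimed, and also the only place the hypothesis that $\left|\dot\psi_d\right|$ is bounded is actually used. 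Your version, by contrast, concludes that $V_2$ never increases at jumps and then attributes the practicality to a flow residual that does not exist; as written it both overclaims (unrestricted global attractivity) and fails to explain why the boundedness of $\dot\psi_d$ is needed. To repair it, redo the jump computation keeping track of the discontinuity of $\omega_{\psi d}$, derive the analogue of \eqref{eq:omegapsicon}, and establish its forward invariance before invoking the hybrid invariance theorem of Goebel et al.
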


\begin{proof}
Let us first check the flow behavior of the hybrid heading subsystem. By substituting \eqref{eq:omegapsid} and \eqref{eq:gammay} into \eqref{eq:newdV2}, we have
\begin{equation}
{\dot V_2} \le  - \frac{1}{2}{\left( {1 - {\rm{c}}\left( {{\delta _\psi }} \right)} \right)^2} - {\left( {{\omega _{\psi d}} - {\omega _\psi }} \right)^2}\label{eq:dV2flow}
\end{equation}
Based on this observation, it follows that ${\dot V_{2}} < 0$ for $\forall \delta_\psi \in \mathcal S^1$, ${e_{\omega\psi}} \in \mathbb R$ except the condition that $\psi_d = \psi$ and $\omega _{\psi d} = \omega _{\psi}$.

Then the jump behavior is checked. The necessary conditions for the jump behavior are $\left| {{\rm{s}}\left( {{\delta _\psi }} \right)} \right| = \delta$ and $ {{\rm{c}}\left( {{\delta _\psi }} \right)}  < 0$. The Lyapunov candidate change is 
\begin{align}
V_2^ +  - {V_2} = &2k_\omega ^{ - 1}{k_\psi }\sqrt {1 + \sqrt {1 - {\delta ^2}} }  \cdot \left| {{{\dot \psi }_d}} \right| \cdot \left| {{\omega _\psi }} \right|\nonumber\\
 &- 2k_\psi ^{ - 1}\sqrt {1 - \sqrt {1 - {\delta ^2}} } \label{eq:jumpchageV2}
\end{align}
Considering the saturation nature of \eqref{eq:vd} and \eqref{eq:dotv}, and the kinodynamic constraints 
that can be exerted on the reference trajectory, it is known that $\left| {{{\dot \psi }_d}} \right|$ can be seen as bounded, whose upper bound is then denoted as $\dot\psi_{d,\max} \in \mathbb R^+$. Then, as long as the heading direction angular velocity $\omega_\psi$ satisfies the following condition:
\begin{align}
\left| {{\omega _\psi }} \right| < \frac{{{k_\omega }k_\psi ^{ - 2}\sqrt {1 - \sqrt {1 - {\delta ^2}} } }}{{{{\dot \psi }_{d,\max }}\sqrt {1 + \sqrt {1 - {\delta ^2}} } }}\label{eq:omegapsicon}
\end{align}
whose right hand side is denoted as $\overline \omega_\psi$,
the decreasing of $V_2$ holds during each jump, such that
\begin{align}
V_2^ +  - {V_2} < 0 \label{eq:V2jumpde}
\end{align}
Due to the strictly monotonic decreasing  of $V_2$ along both flows and jumps, by considering the extreme condition of the the system initial condition, there exists
\begin{align}
&{V_2}\left( 0 \right) = \frac{{k_\omega ^{ - 1}}}{2}\bar \omega _\psi ^2 \Rightarrow \nonumber\\
&\frac{{k_\omega ^{ - 1}}}{2}\omega _{\psi ,\max }^2 + \frac{{k_\omega ^{ - 1}}}{2}\dot \psi _{d,\max }^2 + \sqrt 2 k_\psi ^{ - 1} = \frac{{k_\omega ^{ - 1}}}{2}\bar \omega _\psi ^2 \Rightarrow \nonumber\\
&{\omega _{\psi ,\max }} = \sqrt {\bar \omega _\psi ^2 - \dot \psi _{d,\max }^2 - 2\sqrt 2 k_\psi ^{ - 1}{k_\omega }} 
\end{align}
And if $\left|\omega_\psi(0)\right| < \omega_{\psi,\max}$, then \eqref{eq:omegapsicon} always holds after initialization, which guarantees the decreasing of $V_2$ during jumps. 
It can be found that, by taking sufficiently large $k_\omega$ and $k_\psi$ being positive and sufficiently close to $0$, arbitrarily large value of $\omega_{\psi,\max}$ can be obtained.

Moreover, the time interval during which the variable ${\rm{s}}\left( {{\delta _\psi }} \right)$ changes from $- \delta$ to $\delta$ cannot be infinitesimal, such that the solution is complete and its time
domain is unbounded in the ordinary time direction \cite{Goebel-2009}. Then based on the \emph{Theorem 20} in \cite{Goebel-2009}, we know that the system is now uniformly asymptotically stable with respect to $\left|\omega_\psi(0)\right| < \omega_{\psi,\max}$.
\end{proof}

\begin{Proposition}
 Suppose the following two conditions are satisfied:

 (a) the closed-loop positional subsystem is globally uniformly asymptotically stable and the only equilibrium point is shown as $\bm e_p = \left[ {\begin{array}{*{20}{c}}0&0&0\end{array}} \right]^\top$ and $\bm e_v = \left[ {\begin{array}{*{20}{c}}0&0&0\end{array}} \right]^\top$,

 (b) the closed-loop heading subsystem is globally practically asymptotically stable.

 then when the desired heading direction satisfies that $\left|\dot\psi_{d}\right| \le \dot\psi_{d,\max}$, with dedicatedly chosen controller parameters $k_\omega$ and $k_\psi$, the closed loop cascaded system of \eqref{eq:vertical1}-\eqref{eq:vertical7} is asymptotically stable with respect to a closed set $\mathbb R^3 \times \mathbb R^3 \times \mathcal S^1 \times [-\overline e_{\omega\psi}, \overline e_{\omega\psi}]$, where $\overline e_{\omega\psi} = \omega_{\psi,\max} - \dot\psi_{d,\max} > 0$. And the equilibrium point is shown as $\bm e_p = \left[ {\begin{array}{*{20}{c}}0&0&0\end{array}} \right]^\top$, $\bm e_v = \left[ {\begin{array}{*{20}{c}}0&0&0\end{array}} \right]^\top$, $\delta_\psi = 0$ and $\omega = \omega_d$.
\end{Proposition}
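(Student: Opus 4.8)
The plan is to regard the closed-loop system \eqref{eq:vertical1}--\eqref{eq:vertical7} as a cascade in the sense of \cite{Panteley-1998}, with the positional subsystem \eqref{eq:vertical1}--\eqref{eq:vertical5} as the driven (perturbed) system and the heading subsystem \eqref{eq:vertical6}--\eqref{eq:vertical7} as the driving (perturbing) system, and then to invoke the cascade stability theorem. First I would stack the positional errors into $\bm x_1 = [\bm e_p^\top,\ \bm e_v^\top]^\top$ and the heading states into $\bm x_2 = [\delta_\psi,\ e_{\omega\psi}]^\top$ (carrying the logic variable $h_\psi$ along), and rewrite the positional error dynamics as $\dot{\bm x}_1 = \bm f_1(t,\bm x_1) + \bm g(t,\bm x_1,\bm x_2)$. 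Here $\bm f_1$ is the nominal closed loop obtained under perfect heading tracking (so that $\dot{\bm v}$ equals the $\bm a_d$ of \eqref{eq:dotv}), and $\bm g$ collects the residual acceleration caused by $\delta_\psi \neq 0$, which corrupts the frame rotation $R(\psi)$ relative to $R(\psi_d)$, and by $\Gamma_y \neq 0$, which violates the $\Gamma_x^2 + \Gamma_z^2 = 1$ assumption used to compute the nominal inputs. The decisive structural property is $\bm g(t,\bm x_1,\mathbf{0}) = \mathbf{0}$, i.e. the coupling vanishes exactly on the heading equilibrium.

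Next I would check the hypotheses of the cascade theorem. That $\dot{\bm x}_1 = \bm f_1(t,\bm x_1)$ is globally uniformly asymptotically stable is exactly Proposition~\ref{Pro:positional1}, and the quadratic candidate $V_1$ furnishes the gradient bound $\|\partial V_1/\partial \bm x_1\|\,\|\bm x_1\| \le c\,V_1$ demanded by \cite{Panteley-1998}. That the driving subsystem is (practically) asymptotically stable is Proposition~\ref{Pro:positional2}. The remaining interconnection condition, $\|\bm g(t,\bm x_1,\bm x_2)\| \le \theta_1(\|\bm x_2\|) + \theta_2(\|\bm x_2\|)\,\|\bm x_1\|$ with $\theta_1,\theta_2$ continuous and $\theta_i(0)=0$, is where I would use the saturations in \eqref{eq:vd}--\eqref{eq:dotv}, which keep $\bm v_d$ and $\bm a_d$ bounded, together with the velocity bounds implied by the kinodynamic constraints \eqref{eq:vh-con}--\eqref{eq:vv-con}; these render $\bm g$ Lipschitz-like in $\bm x_1$ with a coefficient that vanishes with $\bm x_2$.

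Granting these, the cascade theorem gives boundedness of all solutions and $\bm x_1 \to \mathbf{0}$ as $\bm x_2$ settles, hence asymptotic stability of the composite equilibrium $\bm e_p = \mathbf{0}$, $\bm e_v = \mathbf{0}$, $\delta_\psi = 0$, $\omega_\psi = \omega_{\psi d}$. The admissible set in the statement is inherited coordinate-by-coordinate: $\bm x_1$ and $\delta_\psi$ are unrestricted because Proposition~\ref{Pro:positional1} is global and $\delta_\psi \in \mathcal S^1$, whereas the restriction on $e_{\omega\psi}$ comes directly from the basin $|\omega_\psi(0)| < \omega_{\psi,\max}$ of Proposition~\ref{Pro:positional2}; combining this with the definition \eqref{eq:omegapsid} of $\omega_{\psi d}$ and $|\dot\psi_d| \le \dot\psi_{d,\max}$ yields $e_{\omega\psi} \in [-\overline e_{\omega\psi},\overline e_{\omega\psi}]$ with $\overline e_{\omega\psi} = \omega_{\psi,\max} - \dot\psi_{d,\max}$. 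Enlarging $k_\omega$ and shrinking $k_\psi$, as in Proposition~\ref{Pro:positional2}, enlarges $\omega_{\psi,\max}$ and hence this set.

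The main obstacle will be reconciling the hybrid, only practically stable driving subsystem with the continuous cascade machinery of \cite{Panteley-1998}: the integrability estimate $\int_{0}^{\infty}\|\bm x_2(t)\|\,{\rm d}t < \infty$ on which the argument rests must be established across both flows and jumps, and the practical (rather than exact) convergence of the heading must be shown to yield stability with respect to the stated closed set rather than to destroy it. I would handle the jumps by recalling that each jump strictly decreases $V_2$, so they cannot enlarge $\bm x_2$ or cause finite escape, and I would handle the flows via the strict decay \eqref{eq:dV2flow}, which provides the exponential-type bound on $\|\bm x_2(t)\|$ needed for integrability on every flow interval; the residual set size then tightens with the hysteresis width $\delta$, recovering the point equilibrium in the limit.
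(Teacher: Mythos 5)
Your overall architecture (positional loop as the driven system, heading loop as the driving system) matches the paper's, but the tool you reach for does not: the paper's proof is a two-line appeal to the reduction theorem for hybrid dynamical systems (\emph{Proposition 3.1} of \cite{Maggiore-2019}), which directly delivers asymptotic stability of a smaller set relative to a larger asymptotically stable set in the presence of jumps, whereas you invoke the continuous-time cascade theorem of \cite{Panteley-1998}. That substitution creates two concrete gaps. First, \cite{Panteley-1998} is stated for continuous time-varying systems; the heading loop here is hybrid, with jumps in $h_\psi$ and a set-valued signum term, so the theorem does not apply as stated and you would have to re-derive its conclusion in the hybrid framework --- which is precisely what the reduction theorem packages for you. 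Second, and more seriously, the integrability condition $\int_0^\infty\|\bm x_2(t)\|\,{\rm d}t<\infty$ that you correctly identify as the crux cannot be extracted from \eqref{eq:dV2flow}. Near the equilibrium the flow decrease is $\dot V_2\le -\tfrac12\left(1-{\rm c}(\delta_\psi)\right)^2-e_{\omega\psi}^2$, and $\left(1-{\rm c}(\delta_\psi)\right)^2=O(\delta_\psi^4)$ while the potential term of $V_2$ in \eqref{eq:newV2} is $O(\delta_\psi^2)$; the resulting comparison $\dot V_2\lesssim -cV_2^2$ gives only $V_2=O(1/t)$, hence $\delta_\psi=O(1/\sqrt t)$, which is not integrable. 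Your assertion that \eqref{eq:dV2flow} provides an ``exponential-type bound'' is therefore unjustified, and the Panteley--Loria hypothesis fails (or at least is not established) at exactly the step your argument rests on.

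What does survive from your proposal is the identification of the interconnection term $\bm g$ vanishing at $\bm x_2=\bm 0$, and the bookkeeping of the admissible set for $e_{\omega\psi}$ via $\overline e_{\omega\psi}=\omega_{\psi,\max}-\dot\psi_{d,\max}$, which is also how the paper arrives at the closed set in the statement. To close the argument you should either (i) follow the paper and verify the hypotheses of \emph{Proposition 3.1} in \cite{Maggiore-2019} --- asymptotic stability of the larger set supplied by Proposition~\ref{Pro:positional2}, relative asymptotic stability of the equilibrium within it supplied by Proposition~\ref{Pro:positional1}, together with the completeness and boundedness conditions already discussed via \cite{Goebel-2009} --- or (ii) if you insist on a cascade route, replace the integrability hypothesis by an ISS-type gain from $\bm x_2$ to $\bm x_1$, which the saturated design \eqref{eq:vd}--\eqref{eq:dotv} plausibly supplies but which you have not verified.
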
 

\begin{proof}
By virtue of the global practical asymptotic stability in (b), the closed-loop heading subsystem is asymptotically stable to the closed set $\mathcal S^1 \times [-\overline e_{\omega\psi}, \overline e_{\omega\psi}]$.
It now follows from \emph{Proposition 3.1} in \cite{Maggiore-2019}, and the asymptotic stability can be guaranteed.
\end{proof}

\begin{figure*}[t]
      \centering
      \includegraphics[width=5.5in]{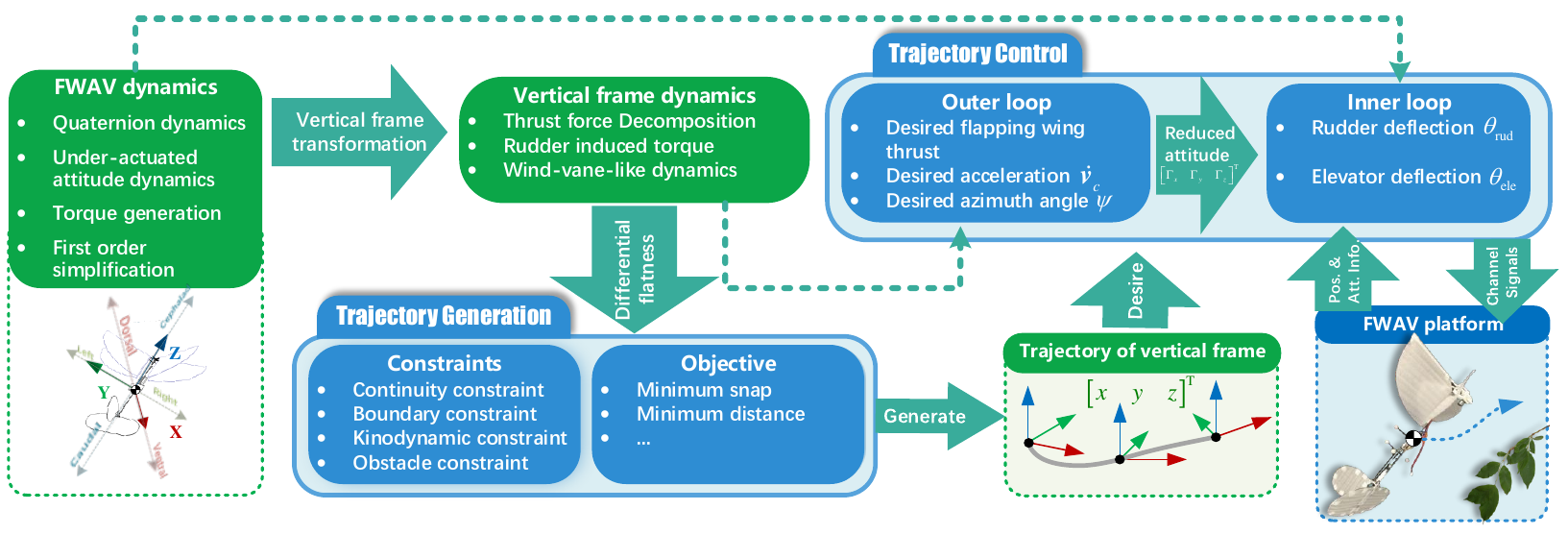}\\
      \caption{Schematic for the proposed trajectory generation strategy, tracking control strategy, and their subsequent implementation in the autonomous flight of the developed FWAV.
      }
      \label{figure:overall}
\end{figure*}

\subsection{Practical Adaptation}
So far, we have provided the outer-loop control strategy for the trajectory tracking task. 
However,  there is still a requirement for several adaptations aimed at practical application.

First of all, since the velocity in $z$-direction is relatively limited, and the windward area is also 
relatively small, therefore, we neglect the resistance term ${{{}^V{k_{{\rm{d,z}}}}{\mathop{\rm sgn}} \left( {{}^V{v_z}} \right){}^Vv_z^2} \mathord{\left/
 {\vphantom {{{}^V{k_{{\rm{d,z}}}}{\mathop{\rm sgn}} \left( {{}^V{v_z}} \right){}^Vv_z^2} m}} \right.
 \kern-\nulldelimiterspace} m}$ in \eqref{eq:vcz}. And according to \cite{Ruiz-2022}, the least squares method can be 
 implemented to identifies the specific value of ${{{}^V{k_{{\rm{d,x}}}}} \mathord{\left/
  {\vphantom {{{}^V{k_{{\rm{d,x}}}}} m}} \right.
  \kern-\nulldelimiterspace} m}$.

Secondly, we need a reduced attitude which contains the information of $\Gamma_{xd}$, 
$\Gamma_{yd}$, and $\Gamma_{zd}$. 
Since the magnitude of $\Gamma_{yd}$ is relatively small,
we can use the normalization to straightforward compose the reduced attitude without significant accuracy loss:
\begin{align}
{\Gamma _{xp}} = {{{\Gamma _{xd}}} \mathord{\left/
 {\vphantom {{{\Gamma _{xd}}} {\sqrt {\Gamma _{xd}^2 + \Gamma _{yd}^2 + \Gamma _{zd}^2} }}} \right.
 \kern-\nulldelimiterspace} {\sqrt {\Gamma _{xd}^2 + \Gamma _{yd}^2 + \Gamma _{zd}^2} }}\\
{\Gamma _{yp}} = {{{\Gamma _{yd}}} \mathord{\left/
 {\vphantom {{{\Gamma _{yd}}} {\sqrt {\Gamma _{xd}^2 + \Gamma _{yd}^2 + \Gamma _{zd}^2} }}} \right.
 \kern-\nulldelimiterspace} {\sqrt {\Gamma _{xd}^2 + \Gamma _{yd}^2 + \Gamma _{zd}^2} }}\\
{\Gamma _{zp}} = {{{\Gamma _{zd}}} \mathord{\left/
 {\vphantom {{{\Gamma _{zd}}} {\sqrt {\Gamma _{xd}^2 + \Gamma _{yd}^2 + \Gamma _{zd}^2} }}} \right.
 \kern-\nulldelimiterspace} {\sqrt {\Gamma _{xd}^2 + \Gamma _{yd}^2 + \Gamma _{zd}^2} }}
\end{align}
where ${\Gamma _{xp}}$, ${\Gamma _{yp}}$, and ${\Gamma _{zp}} \in \mathbb R$, are the reduced attitude the attitude controller uses as desire.

Third, the low pass second order filters are used to generate derivative signals, such as ${{\dot {\bm v}}_d}$, ${{\dot \psi }_d}$, and ${{\dot \omega }_{\psi d}}$. This operation holds the same logic as the famous command filter \cite{Farrell-2009}, although auxiliary compensations are omitted here for brevity. In addition, there exist jumps in the signal of ${{\omega }_{\psi d}}$, the corresponding filter is reset once the jump happens, in avoiding 
unintended large derivative.

Finally,
the inner loop under-actuated attitude control problem should be considered.
In order to keep both the controller and the discussion simple, and 
based on the works in \cite{Reinhardt-2021}, further by neglecting the rotation motion around the $z$-direction of the
body-fixed frame, the following attitude control laws are implemented in the practical flight:
\begin{align}
{\theta _{{\rm{rud}}}} &= {k_{{\rm{rud}}}}\left( {{\Gamma _{yp}}{\Gamma _z} - {\Gamma _{zp}}{\Gamma _y}} \right) - k_{\omega,x} \omega_x\\
{\theta _{{\rm{ele}}}} &= {k_{{\rm{ele}}}}\left( {{\Gamma _{zp}}{\Gamma _x} - {\Gamma _{xp}}{\Gamma _z}} \right) - k_{\omega,y} \omega_y
\end{align}
where positive constants $k_{\rm rud}, k_{\rm ele} \in \mathbb R^+$ are the coefficients for the rudder and the elevator, respectively.
And the positive constants $k_{\omega,x}, k_{\omega,y} \in \mathbb R^+$ are the damping coefficients for the angular velocities along
the X-axis and the Y-axis.
In this controller, the assumption of the proportional relationship between torques and deflection angles is adopted.

\section{Experiments }
Real flight experiments are conducted on the FWAV to validate both the trajectory generation and trajectory tracking strategies. The trajectory generation process is accomplished offline using the MATLAB \emph{fmincon} nonlinear programming solver. Subsequently, the generated trajectories are fed into the online trajectory controller as the desire. The experiments are carried out with the assistance of the Qualisys motion capture arena, with 42 Qualisys Arqus A12 cameras online. And, as shown in Fig. \ref{figure:Overview}, a self-made X-wing FWAV is developed to achieve the real flight experiments, weighing approximately 29~g and possessing a wingspan of 34 cm. Two linear steering servos are utilized to actuate the elevator and rudder, respectively, while a brush-less motor is employed for the flapping wing motion. The control algorithm is executed on the ground station, which receives attitude and position information from the motion capture system. The control signals are then transmitted to the flying flapping robot using the ``DIY multi-protocol TX module''.
The overall system development is completed in our previous work \cite{Qian-2023}.

\begin{figure}[t]
      \centering
      \includegraphics[width=2.8in]{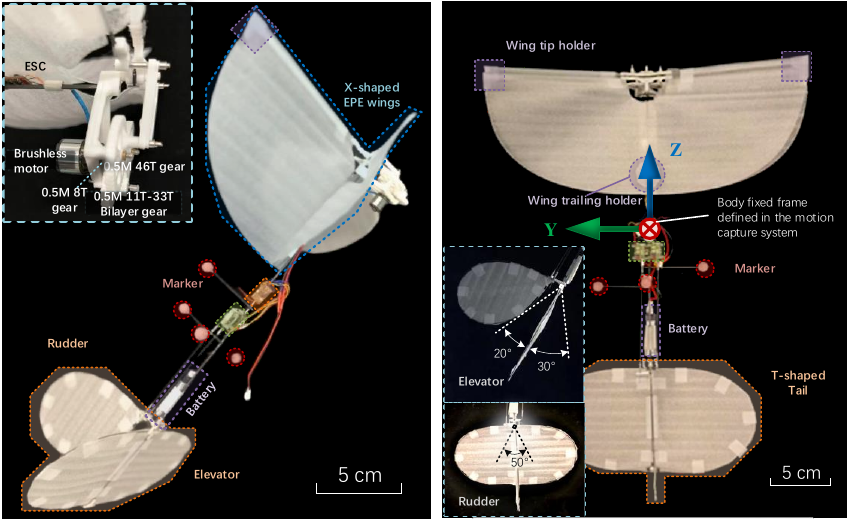}\\
      \caption{Overview of the flapping wing robot used in the real flight experiments.}
      \label{figure:Overview}
\end{figure}
\subsection{Different Flight Cases}
Let us then test the trajectory generation strategy by three carefully chosen cases:

{\bf Case~(a):} The initial position of the FWAV is ${\left[ {\begin{array}{*{20}{c}}
0&0&0 \end{array}} \right]^\top}~{\rm m} $ and the final position is ${\left[ {\begin{array}{*{20}{c}}
1&1&1 \end{array}} \right]^\top}~{\rm m}$. Both the initial and the final velocity are ${\left[ {\begin{array}{*{20}{c}}
0&0&0 \end{array}} \right]^\top}~{\rm m}$. There is also a ball obstacle. The ball center is located at ${\left[ {\begin{array}{*{20}{c}}
0.5 &0.5 &0.5 \end{array}} \right]^\top}~{\rm m}/s$, and the radius of the ball center is 0.5~m. 

{\bf Case~(b):} The initial position is ${\left[ {\begin{array}{*{20}{c}}
0&0&0 \end{array}} \right]^\top}~{\rm m}$ and the final position is ${\left[ {\begin{array}{*{20}{c}}
0&2&0 \end{array}} \right]^\top}~{\rm m}$. Both the initial and the final velocity are ${\left[ {\begin{array}{*{20}{c}}
0&0&0 \end{array}} \right]^\top}~{\rm m/s}$. In this scenario, there exist two axially unbounded cylindrical obstacles, each with a radius of 0.3~m.
The cylinders positions in $Y$-$Z$ plane are $\left[ {\begin{array}{*{20}{c}}{0.5}&{ - 0.2}\end{array}} \right]^\top~{\rm m}$ and 
$\left[ {\begin{array}{*{20}{c}}{1.5}&{  0.1}\end{array}} \right]^\top~{\rm m}$, respectively.

{\bf Case~(c):} In this particular scenario, a total of six waypoints are present, which are intended to be traversed by the trajectory.
 The initial position, the final position, and the 1st waypoint are all set as 
${\left[ {\begin{array}{*{20}{c}}1.5&0&0 \end{array}} \right]^\top}~{\rm m}$. Both the initial and the final velocity are ${\left[ {\begin{array}{*{20}{c}} 0&0&0 \end{array}} \right]^\top}~{\rm m/s}$. The remaining waypoints are sequentially given by  ${\left[ {\begin{array}{*{20}{c}}
{0.3~{\rm c} \frac{\pi }{3}}&{0.3~{\rm s}\frac{\pi }{3}}&0
\end{array}} \right]^\top}~{\rm m}$,  ${\left[ {\begin{array}{*{20}{c}}
{1.5~{\rm c} \frac{{2\pi }}{3}}&{1.5~{\rm s} \frac{{2\pi }}{3}}&0
\end{array}} \right]^\top}~{\rm m}$, ${\left[ {\begin{array}{*{20}{c}}
{-0.3~}&0&0
\end{array}} \right]^\top}~{\rm m}$, ${\left[ {\begin{array}{*{20}{c}}
{1.5~{\rm c} \frac{{4\pi }}{3}}&{1.5~{\rm s} \frac{{4\pi }}{3}}&0
\end{array}} \right]^\top}~{\rm m}$, ${\left[ {\begin{array}{*{20}{c}}
{0.3~{\rm c} \frac{{5\pi }}{3}}&{0.3~{\rm s} \frac{{5\pi }}{3}}&0
\end{array}} \right]^\top}~{\rm m}$.

\subsection{Trajectory Generation Results}
The actually used objective function also incorporates the 
velocity terms, such that it is formulated as
\begin{equation}
\label{eq:minsnap_pro}
\min \int_0^{M \cdot T} {\left( {{\mu _p}{{\sum\nolimits_{j = 1}^3 {\left\| {\frac{{{{\rm{d}}^4}{\sigma _j}}}{{{\rm{d}}{{\rm{t}}^4}}}} \right\|} }^2} }  + {{\mu _v}{{\sum\nolimits_{j = 1}^3 {\left\| {\frac{{{{\rm{d}}}{\sigma _j}}}{{{\rm{d}}{{\rm{t}}}}}} \right\|} }} }\right)} {\rm{dt}}
\end{equation}
where the weights $\mu_p$ and $\mu_v$ help balancing the trade-off between the minimum snap and the distance optimization.

The objective function in this context still adheres to a quadratic form, allowing for its analytical computation. Following the evaluation of various optimization strategies, the sequential quadratic programming algorithm is chosen as the preferred approach.  It is worth noting that other conventional optimization methods, such as the interior point method, can also yield similar outcomes.
The initial guesses of the polynomial coefficients are randomly generated within an appropriate range.
\begin{figure*}[t]
      \centering
      \includegraphics[width=5.5in]{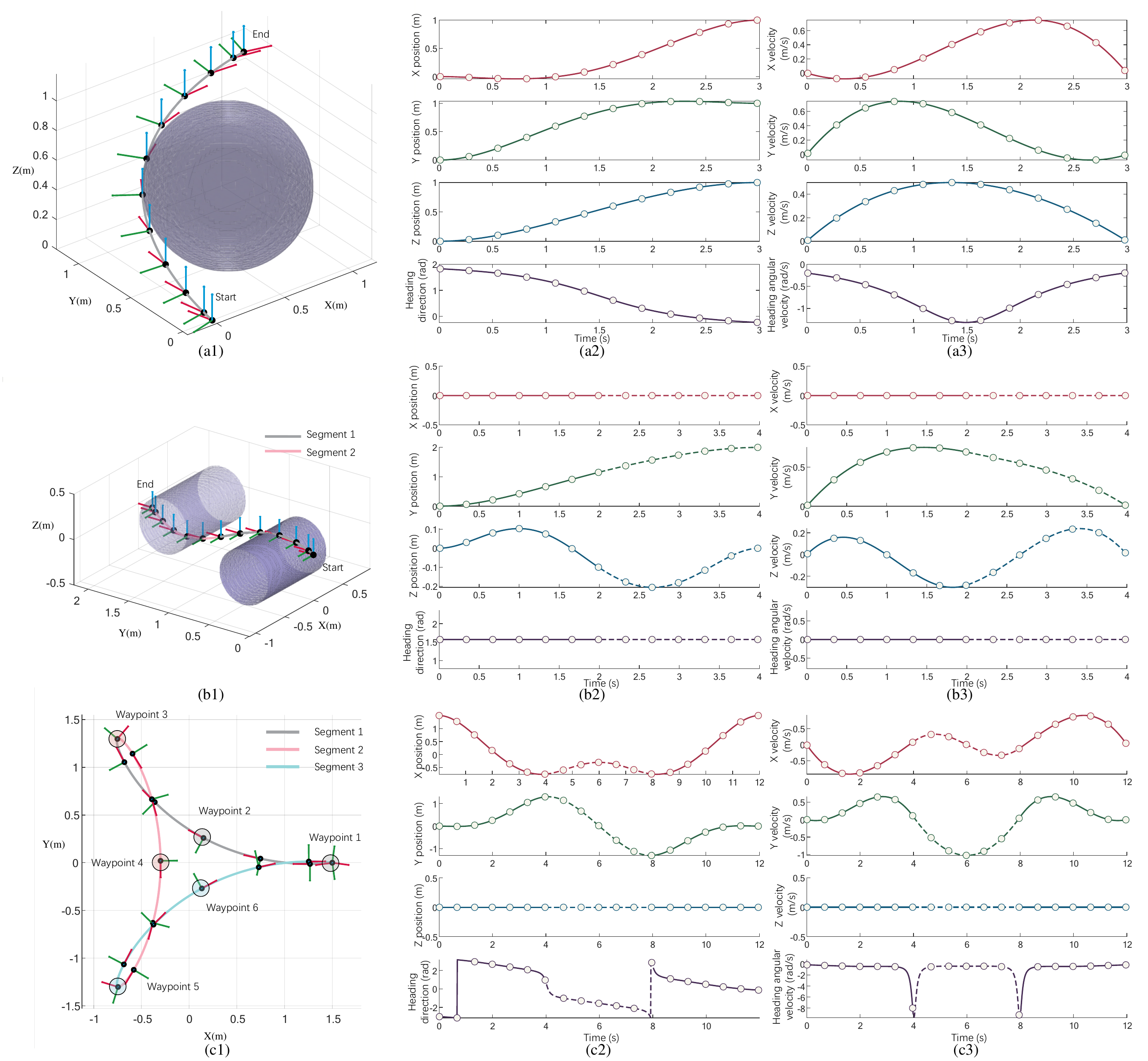}\\
      \caption{Generated trajectories when facing different constraints: (a1) single segment trajectory generation facing a ball obstacle,  (a2) the position curves of the 
      generated trajectory, (a3) the velocity curves of the generated trajectory. Sub-figures (b1), (b2), and (b3) exhibit trajectory generation involving two consecutive cylinder obstacles, using 2 segments of trajectories. On the other hand, sub-figures (c1), (c2), and (c3) demonstrate trajectory generation with six waypoint constraints, using 3 segments of trajectories. It is note worthing that the azimuth angles are not considered to be the trajectory components. The curves depicting this angle are illustrated as they aid in ascertaining the orientation of the vertical frames.
      }
      \label{figure:TrajGen}
\end{figure*}
Evidently, the aforementioned kinodynamic constraints, as well as the obstacle constraints, are imposed throughout the entirety of the trajectories. This is achieved by exerting the constraints on sampled points. 
Then the specific methodologies are described below with 
{\bf Case~(a)} as an example. 
The basic idea is to transform the constraint into a nonlinear function that is equal to zero.
First, let us consider the horizontal velocity constraint.
Here, the sampled time points are denotes as $\tau_i \in \left(0, 3\right)~{\rm s}$ with the mean sampling interval as $0.15~{\rm s}$. 
Therefore, the time points are shown as $\tau_1 = 0.15~{\rm s}$, $\tau_2 = 0.3~{\rm s}$, $\tau_3 = 0.45~{\rm s} \cdots$ $\tau_{19} = 2.85~{\rm s}$.
The corresponding equation is 
\begin{align}
&\sum\limits_{i = 1}^{20} {{\rm{Rec}}\left( {\sqrt {{{\dot x}^2} + {{\dot y}^2}}  - {v_{{\rm{h,max}}}}} \right)}  = 0\\
&{\mathop{\rm Re}\nolimits} c\left( \star \right) = \left\{ {\begin{array}{*{20}{c}}
{0,~~\star < 0}\\
{\star,~~\star \ge 0}
\end{array}} \right. \nonumber
\end{align} 
where the rectifying function ${\rm Rec}(\star) \in \mathbb R \to \mathbb R_{\ge 0}$ is utilized to ensure the equivalence for regions that are not part of the obstacle.
Second, the azimuth angular velocity constraint is shown as
\begin{equation}
\sum\limits_{i = 1}^{20} {{\rm{Rec}}\left( {\arctan\!2\left( {\dot x,\dot y} \right) - \dot \psi_{\max}} \right)}  = 0
\end{equation} 
Third, the ball obstacle constraint is shown as
\begin{equation}
\sum\limits_{i = 1}^{20} {{\rm{Rec}}\left( 0.5 - {\sqrt {{x^2} + {y^2} + {z^2}}  } \right)}  = 0
\end{equation} 

In addition, according to the trajectory complexity, we implement single segment, 2 segments, and 3 segments polynomials, respectively for {\bf Case~(a)},
{\bf Case~(b)}, and {\bf Case~(c)}. Each segment is configured as a 6th order polynomial function.
And in {\bf Case~(c)}, the intermediate points (waypoints 2, 4, and 6 shown in Fig. \ref{figure:TrajGen}) within each segments are assigned to the temporal midpoint of their respective durations, such that 
${\sigma _{1,1}}\left( {\frac{T}{2}} \right) = 0.3~{\mathop{\rm c}\nolimits} \frac{\pi }{3},~{\sigma _{1,2}}\left( {\frac{T}{2}} \right) = 0.3~{\rm s}\frac{\pi }{3},~{\sigma _{1,3}}\left( {\frac{T}{2}} \right) = 0
$.

Subsequent to these optimization configurations, the outcomes depicted in Fig. \ref{figure:TrajGen} are ultimately obtained \footnote{See source code at: https://github.com/Chainplain/CrappyMinimumSnap.}. Through a thorough examination of these results, our proposed strategy demonstrates its capability to generate trajectories that are smooth and safe across a wide range of scenarios.

% \subsection{Trajectory Generation Comparison}
% One notable innovation in our trajectory generation method lies in the integration of flapping wing dynamics. To assess its impact, we conducted trajectory generation experiments with different azimuth angular velocity constraint. 
% For the sake of conciseness and without loss of generality, the comparison analysis is solely conducted on {\bf Case~(a)}.

\subsection{Real Flight}
The generated trajectory is then implemented into real flapping wing flight to further validate the proposed strategies.
The FWAV is released from the operator hand initially.
Then the FWAV enters into the reduced attitude stabilizing mode. 
Subsequently, after an approximate duration of 10 seconds, the FWAV attains a stable flight state from the point of release.
Then the FWAV starts to track the generated trajectory, and simultaneously the inertia origin is reset to the location where the FWAV initiates its tracking, but its orientation remains unchanged. 
After the completion of the tracking task, the FWAV transitions back to the mode of reduced attitude stabilization.
The parameters of the controller are carefully fine-tuned to achieve the best possible outcome \footnote{See source code at: https://github.com/Chainplain/UATrajTrack. And see videos at: https://www.youtube.com/watch?v=yOA0aA6X4FI.}.

\begin{figure}[t]
      \centering
      \includegraphics[width=2.7in]{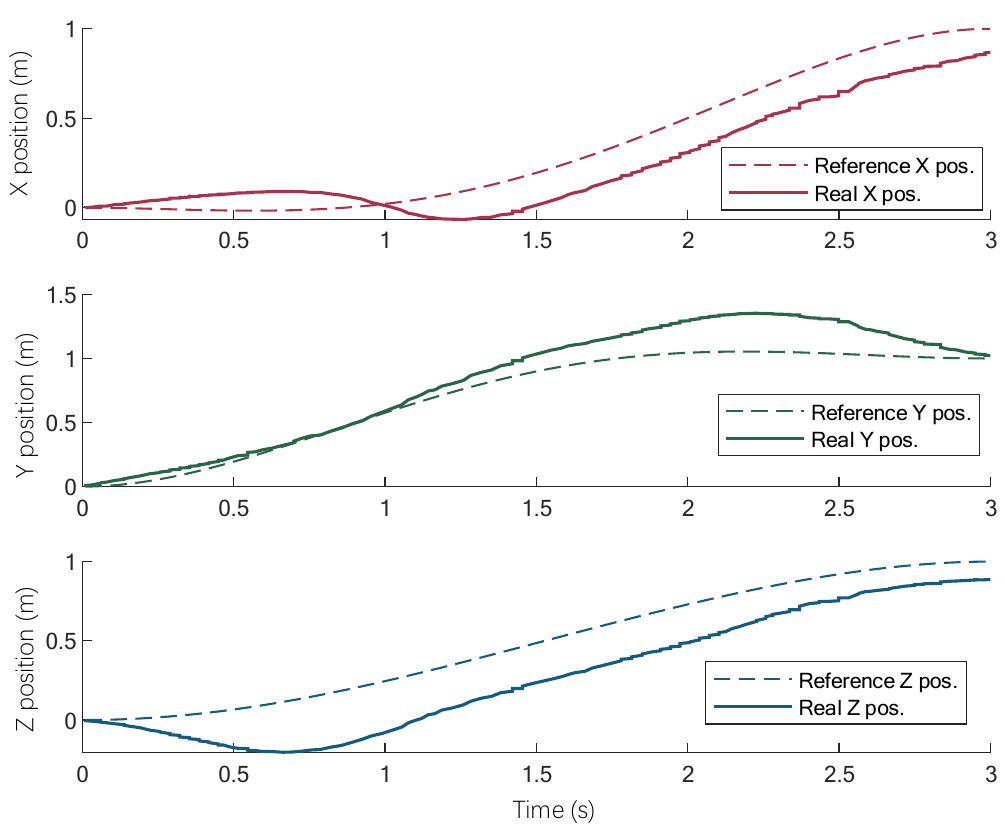}\\
      \caption{Real flight trajectory tracking results of {\bf Case~(a)}.}
      \label{figure:Ballres}
\end{figure}
\begin{figure}[t]
      \centering
      \includegraphics[width=2.7in]{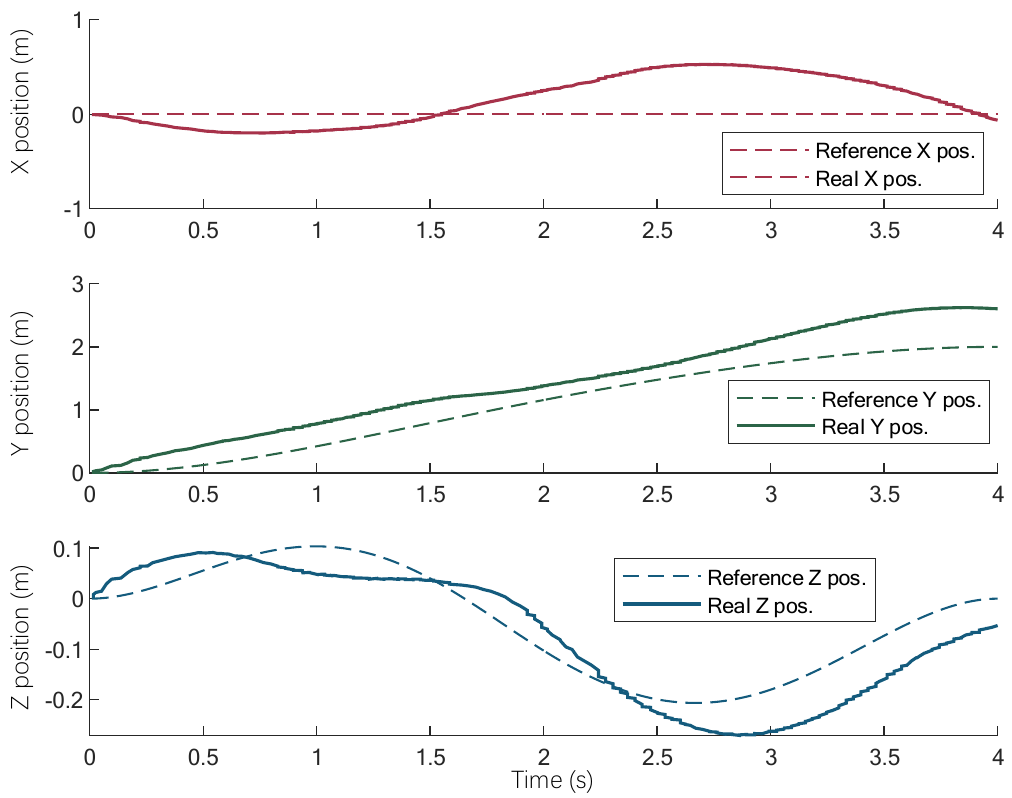}\\
      \caption{Real flight trajectory tracking results of {\bf Case~(b)}.}
      \label{figure:Wallres}
\end{figure}
The obtained experiment results of {\bf Case (a)} and {\bf Case (b)} are clearly shown in Fig. \ref{figure:Ballres}
and Fig. \ref{figure:Wallres}, respectively. 
The trajectory generation and control strategy proposed in this study allows the FWAV to produce feasible trajectories and subsequently navigate along them during flight. 
In fact, due to reasons of maintaining a stable flight, the FWAV is not limited to a condition of zero velocity when it begins the tracking tasks, which probably leads to the large deviations at the first few seconds.
Despite the presence of non-negligible position errors, the tracking behaviors exhibited by the FWAV are notably prominent, enabling effective avoidance of obstacles as defined in the trajectory generation process.
On the other side, the tracking performance for {\bf Case (c)} is not as satisfactory as {\bf Case (a)} and {\bf Case (b)}.
The experiment result behavior of {\bf Case (c)} is demonstrated  shown in Fig. \ref{figure:Wayres}.
It is obvious that, the FWAV can hardly follow the generated trajectory.
Based on the experiment result of {\bf Case (c)} ,
it becomes apparent that the larger positional error can be attributed to the rapid and frequent changes in heading direction, which emerges at waypoint~3 and waypoint~5 shown in Fig~\ref{figure:TrajGen}-(c1), as well as in Fig. \ref{figure:Wayres}.
In order to verify this proposition, we conduct additional experiments where the FWAV is expected to start at the origin and fly along the X-axis with the uniform speed of 0.5~m/s.
The obtained results are shown in Fig. \ref{figure:LineShow}.
As we can see, in the scenario where there is no changes in heading direction, the FWAV can relatively accurately track the desired trajectory.
This phenomenon underscores the significance of the heading-direction-changing-rate constraints in facilitating accurate and stable flight.
Nevertheless, incorporating this constraint into the nonlinear optimization process yields computational challenges that necessitate further exploration of the underactuated dynamics of the FWAV and its corresponding flight missions. Due to these complexities, this aspect is not encompassed within the scope of the present study.

\begin{figure}[t]
      \centering
      \includegraphics[width=2.7in]{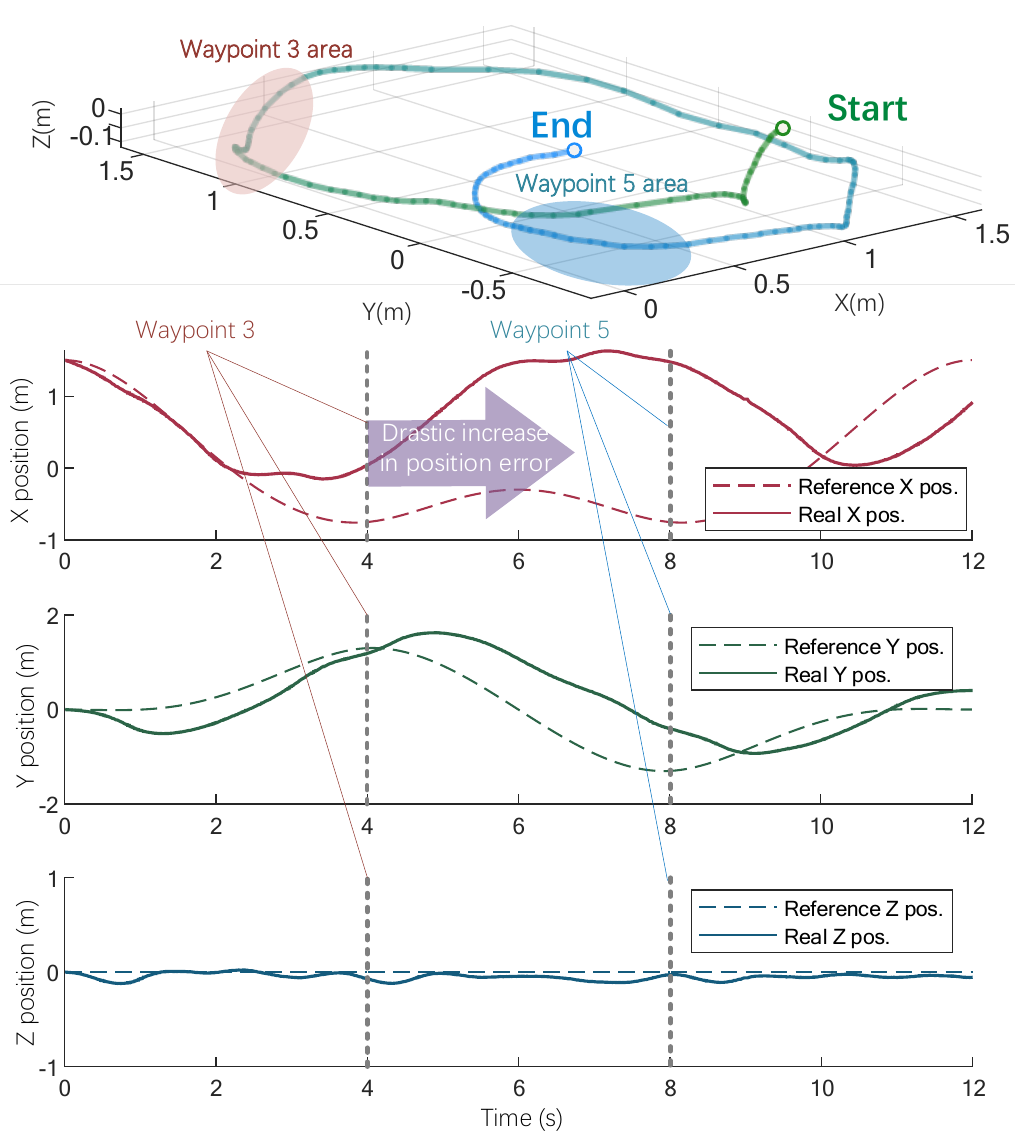}\\
      \caption{Real flight trajectory tracking results of {\bf Case~(c)}.}
      \label{figure:Wayres}
\end{figure}

\begin{figure}[t]
      \centering
      \includegraphics[width=2.7in]{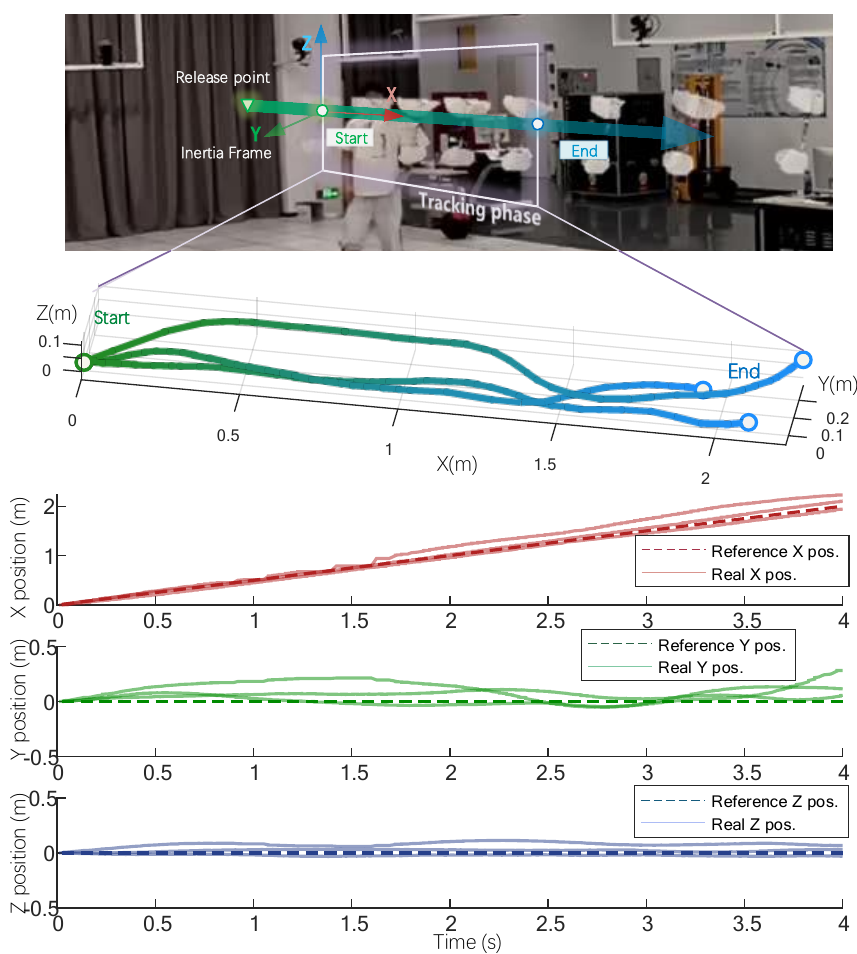}\\
      \caption{Real flight trajectory tracking results where the FWAV flies along a straight line with a constant speed of $0.5~{\rm m/s}$.}
      \label{figure:LineShow}
\end{figure}

Finally, the trajectory tracking control results are summarized in TABLE \ref{tab:Exp}. 
The errors are defined similar to those given in \cite{Hoffmann-2008}. The abbreviation MAX represents maximum, and RMS represents root mean square.
Based on these observations, it is evident that, during flight tasks that do not involve sharp climbs, the controller demonstrates its superior performance in altitude tracking. However, the along-track and cross-track errors, particularly the former, exhibit relatively non-negligible tracking errors. 
While compensating for these two errors, the FWAV requires an attitude adjustment process, which is further compounded by the intricate dynamics of the underlying aerodynamics.

\begin{table}[t]
\caption{Trajectory tracking experiment results in three directions}
  \centering
  \label{tab:Exp}
\begin{tabular}{clllllll}
\hline
\multicolumn{2}{c}{\multirow{2}{*}{Case}} & \multicolumn{2}{c}{\begin{tabular}[c]{@{}c@{}}Along-Track \\ Errors (m)\end{tabular}} & \multicolumn{2}{c}{\begin{tabular}[c]{@{}c@{}}Cross-Track \\ Errors (m)\end{tabular}} & \multicolumn{2}{c}{\begin{tabular}[c]{@{}c@{}}Altitude\\ Errors (m)\end{tabular}} \\ \cline{3-8} 
\multicolumn{2}{c}{}                      & \multicolumn{1}{c}{MAX}                   & \multicolumn{1}{c}{RMS}                   & \multicolumn{1}{c}{MAX}                   & \multicolumn{1}{c}{RMS}                   & \multicolumn{1}{c}{MAX}                 & \multicolumn{1}{c}{RMS}                 \\ \hline
\multicolumn{2}{c}{(a)}                   & 0.311                                     & 0.170                                     & 0.243                                     & 0.147                                     & 0.346                                   & 0.202                                   \\
\multicolumn{2}{c}{(b)}                   & 0.636                                     & 0.406                                     & 0.528                                     & 0.312                                     & 0.096                                   & 0.058                                   \\
\multicolumn{2}{l}{Lines}                 & 0.307                                     & 0.113                                     & 0.282                                     & 0.096                                     & 0.114                                   & 0.052                                   \\ \hline
\end{tabular}
\end{table}

\section{Conclusion}
In this paper, novel trajectory generation and tracking control strategies for an underactuated FWAV are proposed. 
The paper establishes the theoretical basis for trajectory planning, demonstrates the differential flatness property of the FWAV system, and develops a general-purpose trajectory generation strategy. 
A trajectory tracking controller is then proposed using robust and switch control techniques, ensuring overall system stability through Lyapunov analysis. 
The closed-loop integration of trajectory generation and control for real 3-dimensional flight in an underactuated FWAV is achieved. 
Although the trajectory generating and tracking are sufficient for common flight tasks, for example obstacle avoidance,
further improvements in accuracy are still anticipated to enhance the precision of flight missions.
In subsequent researches, we aim to modify the aerodynamic configuration of the FWAV while integrating both attitude and position estimation and control algorithms completely onboard. This endeavor is designed to achieve precise maneuverability and further to realize small-scale FWAV perching. 
Moreover, we anticipate incorporating the constraint of heading direction change rate into the optimization objective, while simultaneously maintaining a quadratic form. This modification aims to enhance the generation of stable and feasible trajectories in real-time.

% biography section
% 
% If you have an EPS/PDF photo (graphicx package needed) extra braces are
% needed around the contents of the optional argument to biography to prevent
% the LaTeX parser from getting confused when it sees the complicated
% \includegraphics command within an optional argument. (You could create
% your own custom macro containing the \includegraphics command to make things
% simpler here.)
%\begin{IEEEbiography}[{\includegraphics[width=1in,height=1.25in,clip,keepaspectratio]{mshell}}]{Michael Shell}
% or if you just want to reserve a space for a photo:

% \begin{IEEEbiography}{Michael Shell}
% Biography text here.
% \end{IEEEbiography}

% % if you will not have a photo at all:
% \begin{IEEEbiographynophoto}{John Doe}
% Biography text here.
% \end{IEEEbiographynophoto}

% % insert where needed to balance the two columns on the last page with
% % biographies
% %\newpage

% \begin{IEEEbiographynophoto}{Jane Doe}
% Biography text here.
% \end{IEEEbiographynophoto}

% You can push biographies down or up by placing
% a \vfill before or after them. The appropriate
% use of \vfill depends on what kind of text is
% on the last page and whether or not the columns
% are being equalized.

%\vfill

% Can be used to pull up biographies so that the bottom of the last one
% is flush with the other column.
%\enlargethispage{-5in}

% that's all folks

\begin{thebibliography}{1}
\footnotesize{
\bibitem{Hassanalian-2017}
M. Hassanalian and A. Abdelkefi, ``Classifications, applications, and design challenges of drones: A review.'' \emph{Prog. Aerosp. Sci.}, vol. 91, pp. 99-131, May. 2017.

\bibitem{Decroon-2020}
G. de~Croon, ``Flapping wing drones show off their skills,''  \emph{Sci. Rob.}, vol. 5, no. 44, eabd0233, Jul. 2020.

\bibitem{Wang-2022}
Z. Wang, Z. Kan, H. Li, D. Li, S. Zhao, and Z. Tu, ``Parametric Study on Aerodynamic Performance of a Flapping Wing Rotor MAV Capable of Sustained Flight,'' \emph{Aerosp.}, vol. 9, no. 10, pp. 1-19, Sep. 2022.

\bibitem{Sane-2002}
S. P. Sane and M. H. Dickinson. ``The aerodynamic effects of wing rotation and a revised quasi-steady model of flapping flight,'' \emph{Journal of experimental biology,} vol. 205, no. 8, pp. 1087-1096, Jan. 2002. 

\bibitem{Chin-2016}
D. D. Chin and D. Lentink, ``Flapping wing aerodynamics: from insects to vertebrates,'' \emph{J. Exp. Biol.}, vol. 219, no. 7, pp. 920-932, Apr. 2016.

\bibitem{Helbling-2018}
E.Farrell Helbling and R. J. Wood, ``A review of propulsion, power, and control architectures for insect-scale flapping-wing vehicles,'' \emph{Appl. Mech. Rev.}, vol. 70, no. 1, 010801, Jan. 2018.

\bibitem{Sihite-2020}
E. Sihite and A. Ramezani, ``Enforcing nonholonomic constraints in Aerobat, a roosting flapping wing model.'' in Proc. \emph{IEEE Conf. Decis. Control} pp. 5321-5327, Dec. 2020

\bibitem{Khan-2021}
 Q. Khan and R. Akmeliawati  ``Review on system identification and mathematical modeling of flapping wing micro-aerial vehicles.'' \emph{Appl. Sci.}, vol. 11, no. 4, pp. 1546, Feb. 2021.

\bibitem{Wangs-2022}
S. Wang, B. Song, A. Chen, Q. Fu, and J. Cui, ``Modeling and flapping vibration suppression of a novel tailless flapping wing micro air vehicle,'' \emph{Chin. J. Aeronaut.}, vol. 35, no. 3, pp. 309-328, Mar. 2022.

\bibitem{McGill-2022}
 R. McGill, N. S. P. Hyun, and R. J. Wood, ``Modeling and control of flapping-wing micro-aerial vehicles with harmonic sinusoids,'' in Proc. \emph{IEEE Rob. Autom. Lett.}, vol. 7, no. 2, pp. 746-753, Apr. 2022.

\bibitem{Biswal-2019}
S. Biswal, M. Mignolet, and A. A. Rodriguez, ``Modeling and control of flapping wing micro aerial vehicles, '' \emph{Bioinspiration Biomimetics,} vol. 14, no. 2, 026004, Jan. 2019.

\bibitem{Hoff-2019}
J. Hoff, U. Syed, A. Ramezani, and S. Hutchinson, ``Trajectory planning for a bat-like flapping wing robot,'' in Proc. \emph{IEEE/RSJ Inter. Conf.  Intell. Rob.  Syst.,}  pp. 6800-6805, Nov. 2019.

\bibitem{Paranjape-2013}
A. A. Paranjape,  S. J. Chung, and J. Kim, ``Novel dihedral-based control of flapping-wing aircraft with application to perching,'' \emph{IEEE Trans. Rob.,} vol. 29, no. 5, pp. 1071-1084, Oct. 2013.

\bibitem{Hoff-2021}
J. Hoff and J. Kim,  ``Two-stage trajectory optimization for flapping flight with data-driven models,'' Proc. in \emph{IEEE Inter. Conf. Rob. Autom.} pp. 7688-7692, May. 2021.

\bibitem{rod-2022}
F. Rodr\'{i}guez, J. M. Díaz-Báñez, E. Sanchez-Laulhe, J. Capitán, and A. Ollero, ``Kinodynamic planning for an energy-efficient autonomous ornithopter,'' \emph{Comput. Ind. Eng.,} vol. 163, 107814, Jan. 2022.

\bibitem{Ol-2008}
M. Ol, G. Parker, G. Abate, and J. Evers, ``Flight controls and performance challenges for MAVs in complex environments,'' Proc. in \emph{AIAA Guid., Navig., Control Conf. Exhibit} Aug. 2008

\bibitem{Kong-2021}
F. Kong, W. Xu, Y. Cai, and F. Zhang, F, ``Avoiding dynamic small obstacles with onboard sensing and computation on aerial robots,'' \emph{IEEE Rob. Autom. Lett.}, vol. 6, no. 4, pp. 7869-7876, Aug. 2021.

\bibitem{Tijmons-2017}
S. Tijmons, G. C. De Croon, B. D. Remes, C. De Wagter, and M. Mulder, ``Obstacle avoidance strategy using onboard stereo vision on a flapping wing MAV.'' \emph{IEEE Trans. Rob.}, vol. 33, no. 4, pp. 858-874, Aug. 2017.

\bibitem{Wissa-2020}
B. E. Wissa, K. O. Elshafei, and A. A. El-Badawy, ``Lyapunov-based control and trajectory tracking of a 6-DOF flapping wing micro aerial vehicle,'' \emph{Nonlinear Dyn.}, vol. 99, no. 4, pp. 2919-2938, Feb. 2020.

\bibitem{He-2020}
W. He, X. Mu, L. Zhang, and Y. Zou, ``Modeling and trajectory tracking control for flapping-wing micro aerial vehicles,'' \emph{IEEE/CAA Journal of Automatica Sinica,} vol. 8, no. 1, pp. 148-156, Jan. 2021.

\bibitem{Ndoye-2023}
A. Ndoye, J. J. Castillo-Zamora, S. Samorah-Laki, R. Miot, E. Van Ruymbeke, and F. Ruffier, ``Vector Field Aided Trajectory Tracking by a 10-gram Flapping-Wing Micro Aerial Vehicle.'' \emph{IEEE Inter. Conf. Rob.  Autom.},  pp. 5379-5385, May. 2023. 

\bibitem{Fei-2023}
F. Fei, Z. Tu, and X. Deng, ``An at-scale tailless flapping wing hummingbird robot: II. Flight control in hovering and trajectory tracking,'' \emph{Bioinspiration Biomimetics}, vol. 18, no. 2, 026003, Jan. 2023.

\bibitem{Nelson-2007}
D. R. Nelson,  D. B. Barber,  T. W. McLain, and R. W. Beard, ``Vector field path following for miniature air vehicles,'' \emph{IEEE Trans. Rob.,} vol. 23, no. 3, pp. 519-529, Jun. 2007.

\bibitem{Zhao-2018}
S. Zhao, X. Wang, Z. Lin, D. Zhang, and L. Shen, ``Integrating vector field approach and input-to-state stability curved path following for unmanned aerial vehicles,'' \emph{IEEE Trans. Syst. Man Cybern.: Syst.,} vol. 50, no. 8, pp. 2897-2904, Jun. 2018.

\bibitem{Pre1}
C. Qian, Y. Fang, Y. Li, ``Quaternion-based Hybrid Attitude Control for an Under-actuated Flapping Wing Aerial Vehicle,'' \emph{IEEE Trans. Mechatron.,} vol. 24, no. 5, pp. 2341 - 2352, Oct.~2019.



\bibitem{Penicka-2022}
R. Penicka, Y. Song, E. Kaufmann, and D. Scaramuzza, ``Learning minimum-time flight in cluttered environments,'' \emph{IEEE Rob. Autom. Lett.}, vol. 7, no. 3, pp. 7209-7216, Jun. 2022.

% \bibitem{Nakamura-1990}
% Y. Nakamura, R. Mukherjee, ``Nonholonomic path planning of space robots via bi-directional approach,'' in Proc. \emph{IEEE Inter. Conf. Rob. Autom.,} pp. 1764-1769, May. 1990.

\bibitem{Qian-2022}
C. Qian, Y. Fang, and Y. Li, ``Neural network-based hybrid three-dimensional position control for a flapping wing aerial vehicle,'' \emph{IEEE Trans. Cybern.,} Early access, May. 2022.

\bibitem{Fan-2022}
K. Fan, Z. Li, and C. Yang, ``Robust tube-based predictive control for visual servoing of constrained differential-drive mobile robots,'' \emph{IEEE Trans. Ind. Electron.,} vol. 65, no. 4, pp. 3437-3446, Oct. 2022.

\bibitem{Han-2022}
Z. Han, Y. Wu, T. Li, L. Zhang, L. Pei, L. Xu, C. Li, C. Ma, C. Xu, S. Shen, F. Gao, Differential flatness-based trajectory planning for autonomous vehicles. arXiv preprint arXiv:2208.13160, 2022.

\bibitem{Mellinger-2011}
D. Mellinger and V. Kumar. Minimum snap trajectory generation and control for quadrotors. In Proc. \emph{ IEEE Int. Conf. Robot. Autom. (ICRA),} pp. 2520-2525, May, 2011.

\bibitem{Wang-2020}
Z. Wang, X. Zhou, C. Xu, J. Chu, and F. Gao, ``Alternating minimization based trajectory generation for quadrotor aggressive flight,'' \emph{ IEEE Rob. Autom. Lett.,} vol. 5, no. 3, pp. 4836-4843, Jun. 2020.

\bibitem{Wolff-2014}
E. M. Wolff, U. Topcu, and R. M. Murray, ``Optimization-based trajectory generation with linear temporal logic specifications,'' \emph{IEEE Int. Conf. Rob. Autom.,} pp. 5319-5325, May. 2014.



\bibitem{Panteley-1998}
E. Panteley and A. Loria, ``On global uniform asymptotic stability of nonlinear time-varying systems in cascade,'' \emph{Syst. Control Lett.,} vol. 33, no. 2, pp. 131-138, Feb. 1998.

\bibitem{Haddad-2018}
W. M. Haddad and V. Chellaboina, \emph{Nonlinear Dynamical Systems and Control: a Lyapunov-based Approach,} Princeton university press, 2018.

\bibitem{Goebel-2009}
R. Goebel, R.~G. Sanfelice, and A.~R. Teel, ``Hybrid dynamical systems,'' \emph{IEEE Control Systems}, vol. 29, no. 2, pp. 28-93, Apr. 2009.

\bibitem{Maggiore-2019}
M. Maggiore,  M. Sassano, and L. Zaccarian,  ``Reduction theorems for hybrid dynamical systems,'' \emph{ IEEE Trans. Autom. Control}, vol. 64, no. 6, pp. 2254-2265, Jun. 2019.

\bibitem{Ruiz-2022}
C. Ruiz, J. Acosta, and A. Ollero, ``Aerodynamic reduced-order Volterra model of an ornithopter under high-amplitude flapping,'' \emph{ Aerosp. Sci. Technol.,} vol. 121, 107331, Feb. 2022.

\bibitem{Farrell-2009}
J. A. Farrell,  M. Polycarpou,  M. Sharma, and W. Dong, ``Command filtered backstepping,''  \emph{IEEE Trans. Autom. Control,} vol. 54, no. 6, pp. 1391-1395, Jun. 2009.

\bibitem{Reinhardt-2021}
D. Reinhardt and T. A. Johansen. ``Nonlinear model predictive control combined with geometric attitude and speed control for fixed-wing UAVs,'' in Proc. \emph{IEEE Int. Conf.  Unmanned Aircr. Syst.,} pp. 465-475, Jul. 2021.

\bibitem{Qian-2023}
C. Qian, Y. Fang, J. Yan, Y. Liang, and T. Li,  ``Towards Practical Autonomous Flight Simulation for Flapping Wing Biomimetic Robots with Experimental Validation,'' arXiv preprint arXiv:2303.04395, 2023.

\bibitem{Hoffmann-2008}
G. Hoffmann, S. Waslander, and C. Tomlin,  ``Quadrotor helicopter trajectory tracking control,'' Proc. in \emph{AIAA Guid., Navig., Control conf. ex.}, Aug. 2008.
% \bibitem{Mayhew-2011}
% C.~G. Mayhew, R.~G. Sanfelice, and A.~R. Teel, ``Quaternion-based hybrid control for robust global attitude tracking,'' \emph{IEEE Trans. Autom. Control, } vol. 56, issue. 11, pp. 2555-2566, Nov. 2011.
}
\end{thebibliography}
\end{document}